\def\markboth#1#2{\def\leftmark{\@IEEEcompsoconly{\sffamily}\MakeUppercase{\protect#1}}%
\def\rightmark{\@IEEEcompsoconly{\sffamily}\MakeUppercase{\protect#2}}}
\def\func1(#1,#2){exp(-(#1-#2)^2/(0.01))}
\tikzset{external/system call={pdflatex \tikzexternalcheckshellescape 
                                        -halt-on-error
                                        -interaction=batchmode 
                                        -jobname "\image" "\texsource"
                                        && inkscape \image.pdf --export-eps=\image.eps --export-ps-level=3}}
\newcommand{\Sm}{\mathbf{S}}
\newcommand{\U}{\mathbf{U}}
\newcommand{\V}{\mathbf{V}}
\newcommand{\A}{\mathbf{A}}
\newcommand{\B}{\mathbf{B}}
\newcommand{\I}{\mathbf{I}}
\newcommand{\Y}{\mathbf{Y}}
\newcommand{\Z}{\mathbf{Z}}
\newcommand{\one}{\mathbf{1}}
\newcommand{\h}{\mathbf{h}}
\newcommand{\x}{\mathbf{x}}
\newcommand{\s}{\mathbf{s}}
\newcommand{\vv}{\mathbf{v}}
\newcommand{\uu}{\mathbf{u}}
\newcommand{\y}{\mathbf{y}}
\newcommand{\z}{\mathbf{z}}
\newcommand{\ab}{\mathbf{a}}
\newcommand{\tr}{\textnormal{tr}}
\newcommand{\vstack}{\textnormal{vec}}
\newcommand{\Ex}[2]{{\textnormal{E}_{#1}\left[#2\right]}}
\newcommand{\CInf}[3]{{\textnormal{I}\left(#1;#2|#3\right)}}
\newcommand{\Inf}[2]{{\textnormal{I}\left(#1;#2\right)}}
\newtheorem{definition}{Definition}
\newtheorem{lemma}{Lemma}
\theoremstyle{plain}
\newtheorem{remark}{Remark}
   \definecolor{blueH3}{rgb}{0,.5,1}
   \definecolor{blueH2}{rgb}{0,0.25,0.75}
   \definecolor{blueH1}{rgb}{0,0,0.5}   
   \definecolor{grayOldText}{rgb}{.5,.5,.5}
   \definecolor{VCobalt}{HTML}{005682}
   \definecolor{TZTeal}{HTML}{008080}
   \definecolor{TZTealfaded}{HTML}{F0FFFF}
   \definecolor{KYJade}{HTML}{008151}
   \definecolor{ARust}{HTML}{a10000}
   \definecolor{FFucsia}{HTML}{7000c3}   
   \definecolor{Tangerine}{HTML}{d45500}
\newcommand{\CASE}[1]{\STATE \textbf{case} #1\textbf{:} \begin{ALC@g}}
\newcommand{\ENDCASE}{\end{ALC@g}}
\newcommand{\DEFAULT}{\STATE \textbf{default:} \begin{ALC@g}}
\newcommand{\ENDDEFAULT}{\end{ALC@g}}
\newcommand{\DEFAULTLINE}[1]{\STATE \textbf{default:} }
\newcounter{MYtempeqncnt}
\newcommand\remembertext[2]{
  \immediate\write\@auxout{\unexpanded{\global\long\@namedef{mytext@#1}{#2}}}%
  #2%
}
\newcommand\recalltext[1]{%
  \ifcsname mytext@#1\endcsname
    \@nameuse{mytext@#1}%
  \else
    ``??''
  \fi
}
\newtcolorbox{tcbremark}{
  enhanced jigsaw,
  oversize,
  rightrule=0pt,
  toprule=0pt,
  bottomrule=0pt,
  colback=white,
  arc=0pt,
  outer arc=0pt,
  title style={white},
  fonttitle=\color{black}\bfseries,
  titlerule=0pt,
  bottomtitle=0pt,
  top=0pt,
  bottom=0pt,
  left=5pt,
}
\newcounter{rcnt}
\newcounter{ccnt}
\newlength{\ansspace}
\newlength{\stdleftskip}
\newlength{\stdrightskip}
\newlength{\citeskip}
\newtheorem{theorem}{Theorem}
\begin{document}
\sloppy

\title{The SIMO Block Rayleigh Fading Channel Capacity Scaling with Number of Antennas, Bandwidth and Coherence Length}

%

\author{Felipe Gomez-Cuba$^1$ \thanks{
    $^1$AtlanTTic, Universidade de Vigo, Spain. Email: \texttt{gomezcuba@gts.uvigo.es} } }

\maketitle

\begin{abstract}
This paper studies the capacity scaling of non-coherent Single-Input Multiple-Output (SIMO) independent and identically distributed (i.i.d.) Rayleigh block fading channels versus bandwidth ($B$), number of receive antennas ($N$) and coherence block length ($L$). In non-coherent channels (without Channel State Information --CSI) capacity scales as $\Theta\left(\min(B,\sqrt{NL},N)\right)$. This is achievable using Pilot-Assisted signaling. Energy Modulation signaling rate scales as $\Theta\left(\min(B,\sqrt{N})\right)$. If $L$ is fixed while $B$ and $N$ grow, the two expressions grow equally and Energy Modulation achieves the capacity scaling. However, Energy Modulation rate does not scale as the capacity with the variable $L$. The coherent channel capacity with a priori CSI, in turn, scales as $\Theta\left(\min(B,N)\right)$. The coherent channel capacity scaling can be fully achieved in non-coherent channels when $L\geq\Theta(N)$. In summary, the channel coherence block length plays a pivotal role in modulation selection and the capacity gap between coherent and non-coherent channels. Pilot-Assisted signaling outperforms Energy Modulation's rate scaling versus coherence block length. Only in high mobility scenarios where $L$ is much smaller than the number of antennas ($L\ll\Theta(\sqrt{N})$), Energy Modulation is effective in non-coherent channels.
\end{abstract}
\begin{IEEEkeywords}
  Massive MIMO, wideband communications, non-coherent channel, energy
  modulation
\end{IEEEkeywords}

\section{Introduction}
\label{sec:introduction}

Internet traffic demand is expected to continue to grow exponentially \cite{Cisco2019}. Technologies such as the recent 5G cellular \cite{3GPPNRoverall16} and WiFi6 Wireless Local Area Network \cite{khorov2019tutorial,8319416} standards are upgraded regularly. The increase of the transmission bandwidth and number of antennas in each new standard generation are major factors in the increase of capacity of wireless systems \cite{Dohler2011,marzetta2006much}. In addition, the new standards support ``flexible waveforms'' that can dynamically modify the transmit bandwidth and number of antenna ports used by different users. As more spectrum and antennas are added to wireless standards, efficient channel estimation and the impact of channel state information (CSI) uncertainty has received a lot of attention. On the one hand, CSI limitations are one key factor in the performance of coherent receivers in \textit{massive MIMO} \cite{marzetta2006much}. On the other hand, recent results for the wideband channel have sparked new interest in non-coherent signaling techniques \cite{journals/tit/MedardG02}. Specifically, there has been significant recent literature proposing practical modulation schemes based on non-coherent Energy Detection (see \cite{9445644} and references therein). Overall, as wireless standards evolve, it is very important to understand the relation between the waveform dimensions (specifically bandwidth and number of antenna ports), the time-varying channel uncertainty, and the engineering choice between Pilot Assisted (PA) or non-coherent signaling schemes.

From an Information Theoretic point of view, the ultimate impact of channel uncertainty can be revealed by comparing the capacity of the \textit{coherent channel}, with the assumption that CSI is a priori revealed to the receiver; and the \textit{non-coherent channel}, in which CSI is not a priori available. On the transmitter side, transmitter CSI or feedback may also be assumed or not. We leave the topic of transmitter CSI out of the scope of this paper, in which we focus on comparing the capacity with and without receiver CSI, without feedback in both cases. In the non-coherent channel, it is possible to either use pilot sequences to estimate the channel, or to employ non-coherent signaling schemes. Thus, note that the non-coherent channel capacity also upper bounds the maximum rate of coherent receivers that rely on channel estimation (the name ``non-coherent channel'' is a bit counter-intuitive in this sense). While the coherent channel capacity is well-known \cite{goldsmith2005book}, the non-coherent channel capacity is not yet fully characterized. Important results have obtained the Degrees of Freedom (DoF) and Diversity-Multiplexing Trade-off (DMT) in the high-SNR regime \cite{866662,zhengoptimal,zheng2002communication,zheng2002diversity}, the wideband capacity in the limit as bandwidth goes to infinity \cite{journals/tit/TelatarT00,journals/tit/MedardG02,journals/tit/Verdu02,Medard2005,Zheng2007noncoherent,Ray2007noncoherent,Sethuraman2009,journals/twc/LozanoP12,fgomezUnified}, some properties of the probability density of the optimal channel input \cite{Marzetta1999,Medard2005,Perera2006,Chowdhury2014a,7541626,1337103,8826443}, and optimal pilot strategies to maximize the ``achievable rate'' of specific PA receivers \cite{Hassibi2003,Lozano2008,Jindal2010}.

In this paper we compare the capacity \textit{scaling} for coherent and non-coherent wideband Single-Input Multiple-Output (SIMO) block Rayleigh fading channels. Within the scope of achievable rates in a non-coherent channel, we also compare two signaling strategies: channel estimation with pilots vs energy detection. We focus on the single transmit antenna case for better comparison with closely related results in \cite{8826443}, detailed in the next paragraphs. The assumption of a wideband block fading model represents a reasonable abstraction for some modern wireless systems that use multi-carrier modulations and time-frequency-slotted frame allocations \cite{3GPPNRoverall16,khorov2019tutorial,8319416}. With the aim to inform the selection of waveform parameters, spectrum policies and antenna array dimensions, our analysis studies the effect on capacity growth of the following three major variables:
\begin{enumerate}
 \item \textbf{The Channel Coherence Time:} refers generally to the interval of time during which the random channel displays correlation. This concept is subject to different definitions under different channel models. In i.i.d. block-fading models, we define the \textbf{channel coherence length} as the length of time, in channel uses, that the channel remains constant. The block-fading channel coherence length has been shown to affect the non-coherent channel capacity-achieving input distribution \cite{Marzetta1999,Perera2006}, DMT \cite{zheng2002diversity}, and optimal pilot schemes \cite{Hassibi2003}. The intuition is that, if the channel remains constant for a longer time, the relative size of the pilot overhead compared to the total data codeword length decreases. Thus, the non-coherent channel capacity is expected to converge to the coherent channel capacity as the channel coherence length increases. This holds for all existing results such as the optimal input \cite{Marzetta1999} and DMT \cite{zheng2002diversity}.

 \item \textbf{The Bandwidth:} allows to increase the discrete time symbol rate, but it also decreases the Signal to Noise Ratio (SNR) when the thermal noise has constant power spectral density. The wideband non-coherent channel capacity is closely related to the low-SNR asymptotic analysis of capacity \cite{journals/tit/Verdu02,Medard2005,Zheng2007noncoherent,Ray2007noncoherent,Sethuraman2009}. M\'edard and Gallager \cite{journals/tit/MedardG02} showed that ``overspreading'' occurs when the channel input has finite fourth moment and bandwidth is too large, causing the achievable rate in the non-coherent channel to begin decreasing, rather than increasing monotonically as in the coherent case, if bandwidth is increased excessively. For Rayleigh fading, the ``critical bandwidth'' can be explicitly calculated \cite{journals/twc/LozanoP12}. The use of ``peaky'' input signals with an infinite fourth moment has been proposed to overcome this limitation \cite{Medard2005,Zheng2007noncoherent,Ray2007noncoherent}. Nevertheless, it can be shown that peaky signaling does not actually remove the critical bandwidth limit, but rather moves it to the time domain under a constant ``critical bandwidth occupancy'' limitation \cite{1337103,fgomezUnified}. The bandwidth overspreading threshold has been shown to grow with the square root of the coherence block length \cite{Ray2007noncoherent,journals/twc/LozanoP12,fgomezUnified}, showing that both parameters are closely related and that also in this sense the non-coherent channel capacity converges to the coherent channel capacity when the coherence length increases. The recent popularity of mmWave frequency bands increases interest in communications with very large bandwidth and antenna arrays \cite{Du2017,Ferrante2016}. Although in these bands the assumption of rich scattering with a Rayleigh fading distribution may not be applicable, concern about these phenomena in sparse multipath mmWave channels is justified, as overspreading occurs in other channel models, including sparse multipath channels \cite{journals/tit/TelatarT00}. We remark that the results in our paper apply to the i.i.d. Rayleigh channel, and while mmWave channels are subject to the problem of overspreading, engineers must be cautious about the channel model difference before applying our conclusions in a mmWave context. A discussion of literature on multiple mmWave and ultra-wideband channel modeling frameworks and their relation to our result is provided in Appendix \ref{ap:chanmod}.
 
 \item \textbf{The Number of Antennas:} has been shown to increase the DMT of the channel
 \cite{866662,zhengoptimal,zheng2002communication,zheng2002diversity}, enabling dramatic increases in the array gain or spatial multiplexing. In recent years, \textit{Massive MIMO} proposed the use of a large number of antennas \cite{marzetta2006much,Bjornson2016,hoydis2018,marzetta2013special}. Massive MIMO enables a very large array combining gain or diversity gain, mitigating pathloss and fading. Moreover the asymptotic properties of large matrices permit to implement efficient transceivers \cite{journals/ftcit/TulinoV04}. The DMT and optimal input distribution results in \cite{Marzetta1999,zheng2002communication} imply that, assuming there is no feedback, increasing the number of transmit antennas above the channel coherence length does not increase the non-coherent channel capacity. For this reason, analyses of the capacity scaling assuming a fixed coherence length have to focus on studying a large number of \textbf{receive} antennas. Chowdhury \textit{et al} \cite{Chowdhury2014a} showed that, as the number of receive antennas goes to infinity, an Energy Modulation (EM) constellation in a frequency-flat non-coherent SIMO channel can achieve the same capacity scaling as a coherent channel. This result was extended to the frequency selective case with a large bandwidth in \cite{8826443}. Unlike in this paper, the scaling result in \cite{8826443} considered scaling for both the bandwidth and the number of antennas, but assumed a fixed coherence length. This means that the result in \cite{8826443} side-stepped one key parameter in the study of the capacity gap between coherent and non-coherent channels. Note that in our paper we assume that the coherence block length scales as well, and thus we could potentially support a scaling number of transmit antennas too. However, we adopt the SIMO channel model to provide a much more clear comparison with \cite{8826443}, as well as for the sake of clarity of the model and problem tractability. We leave the extension of our result to MIMO for future work.
\end{enumerate}

Most Massive MIMO analyses assume a fixed bandwidth \cite{zheng2002communication,zheng2002diversity,marzetta2006much,Chowdhury2014a} and most wideband analyses assume a fixed number of antennas \cite{journals/tit/Verdu02,Ray2007noncoherent,Sethuraman2009,journals/tit/TelatarT00,journals/twc/LozanoP12,fgomezUnified}. However, in practice these two parameters, as well as the coherence length, display a non-trivial interplay between them. The DMT is related to high-SNR capacity analyses, whereas wideband capacity calculations focus on low-SNR regimes, making the combination of results from both families of literature challenging. The analysis in \cite{8826443} studied capacity scaling when \textit{both} the bandwidth and the number of antennas grow jointly with a fixed coherence length \cite{8826443}. In brief words, the main principle in \cite{8826443} and our model is a two-regime rate analysis that combines the DoF-limited narrowband and the power-limited wideband rate scaling: if the bandwidth is below a certain threshold, capacity scaling is in a high-SNR-like regime in which rate scales with the symbol rate and the DoF of the channel. Conversely, when the bandwidth exceeds the overspreading threshold, capacity scaling adopts a low-SNR-like regime in which rate scales with the total energy arriving at the receiver (possibly featuring the receive array gain). Moreover, this bandwidth threshold itself changes with the number of antennas and coherence block length. The result in \cite{8826443} showed that, when the coherence length is fixed, the non-coherent channel capacity cannot scale faster than the minimum between the bandwidth and the square root of the number of receive antennas, failing to match the linear scaling with the number of receive antennas of the coherent channel capacity in the wideband regime \cite{8826443}. Moreover, \cite{8826443} shows that EM achieves the joint capacity scaling in the non-coherent channel. As the analysis in \cite{8826443} assumes a constant coherence length, the main goal of our paper is to complete the model by fully taking into account the channel coherence length. This will show that changes in the channel coherence length play a very important role in capacity scaling that was not captured by the result of \cite{8826443}.

In this paper we characterize the capacity of coherent and non-coherent channels as a \textbf{three dimensional} function of bandwidth, number of receive antennas, and channel coherence length. To characterize the capacity scaling, we consider that these three parameters grow to infinity jointly. We revise the achievable rate of the EM scheme proposed in \cite{8826443}, and introduce an alternative PA signaling scheme that outperforms EM in terms of achievable rate scaling. 
In consequence, our result shows that EM is no longer generally optimal, and depending on the channel coherence length, PA schemes can provide a greater rate.
In addition, we compute an upper bound to non-coherent capacity scaling, which shows that our PA scheme always scales optimally, i.e. it can always achieve the capacity scaling, whereas EM cannot. 

Our main result shows that, when the bandwidth is smaller than a certain threshold, the coherent channel and the non-coherent channel can achieve the same capacity scaling. In general, this is the ``high-SNR-like'' regime in which capacity grows linearly with the ``pre-log'' term, i.e. in our case the bandwidth. The critical bandwidth threshold scales with the minimum between $i)$ the square root of the product of the number of antennas and the coherence length, or $ii)$ just the number of receive antennas. When the bandwidth is greater than the critical bandwidth threshold, the non-coherent channel enters the ``low-SNR-like'' regime in which the capacity grows linearly with the received energy, i.e. in our case with the minimum between the square root of the product of the number of antennas and the coherence length, or the number of receive antennas. Therefore, for large bandwidths, when the channel coherence length is large in comparison to the number of antennas, in a scaling sense, the conditions of the converse proof in \cite{8826443} are no longer applicable and EM signaling cannot achieve the non-coherent channel capacity scaling. Moreover, when the channel coherence length scales faster than the number of antennas, the non-coherent wideband massive SIMO channel achieves the same capacity scaling as the coherent channel, as should be expected, and PA channel estimation achieves the capacity scaling in all regimes. 

If we assume a fixed coherence length in our result, the critical bandwidth threshold scales with the square root number of antennas and the non-coherent channel capacity scaling reproduces the result of \cite{8826443}. Moreover, the EM scheme of \cite{8826443} can achieve the non-coherent capacity in this particular case. However, our result generalizes \cite{8826443} showing that, as the coherence length increases, the non-coherent capacity scaling approaches that of the coherent channel, PA schemes can achieve this rate scaling, and EM schemes remain limited to the same rate scaling as in \cite{8826443}, failing to keep up with the capacity.

The rest of this paper is structured as follows. Section \ref{sec:model} describes the system model. Section \ref{sec:knownres} describes a selection of prior known results relevant to our analysis. Section \ref{sec:mainres} contains the main results of our analysis. Section \ref{sec:ub} presents the capacity scaling upper bound that proves the converse of the main result. Section \ref{sec:pa} presents the PA scheme that achieves the capacity scaling result. Section \ref{sec:fem} presents our revision of the EM encoding scheme. Section \ref{sec:sim} demonstrates the main results in simulations. And finally, Section \ref{sec:conclusions} provides the conclusions.

\subsection{Notation}

Lowercase bold letters $\ab$ denote vectors and uppercase bold letters denote matrices $\A$. We use $\A^*$, $\A^T$ and $\A^H$ to denote the conjugate, transpose and Hermitian matrices of $\A$, respectively. The Kro\"enecker product between two matrices $\A$ and $\B$ is
denoted as $\A\otimes\B$.
We use $a_i$ to denote the $i^{th}$ element of vector $\ab$ and $A_{ij}$ for the $i,j$-th element in a matrix $\A$. $\vstack(\A)$ denotes the ``vector stacking'' of matrix $\A$, defined as a vector containing all the coefficients of $\A$ in column-first order, i.e. the $(i,j)$-th coefficient of an $N$-row matrix $\A$ is the $i+(j-1)N$-th element of $\vstack(\A)$. We use calligraphic letters to denote sets, and $|\mathcal{A}|$ is the cardinality of set $\mathcal{A}$. We use $\{\A[i]\}_{i=n}^{m}$ to represent a set that contains the collection of matrices $\A[i]$ for $i$ from $n$ to $m$. 
We use $f(n) = \Theta(g(n))$ to denote that there exist constants $c_1,c_2,n_0>0$ such that for all $n>n_0$, $c_1<\frac{f(n)}{g(n)}<c_2$
\cite{Knuth1976}.

\section{System Model}
\label{sec:model}

We assume a frequency selective, block fading, i.i.d. Rayleigh SIMO wideband channel with a single-antenna transmitter and $N$ antennas at the receiver. The transmitted signal bandwidth is divided in $B$ independent subcarriers separated by a bandwidth $\Delta f$ each, that experience independent frequency-flat Rayleigh distributed channel coefficients. We assume a block-fading channel in which, for each subcarrier, the channel coefficient remains the same for $L$ consecutive symbols, and takes a new i.i.d. value in each \textit{coherence slot}. We remark that the symbol period or channel use time in seconds is $\frac{1}{\Delta f}$, the channel coherence time in seconds is $\frac{L}{\Delta f}$, and $\Delta f$ may be interpreted as the ``coherence bandwidth'' as well as the subcarrier separation. Our channel model can be represented as a resource grid depicted in Fig. \ref{fig:channelOFDMAb}. Among others, this model is representative of contemporary multicarrier technologies, in which the OFDM symbol duration is shorter than the channel coherence time and pilots are transmitted once in each frame of several consecutive OFDM symbols. 

\begin{figure}[t]
  \centering
  \includegraphics[width=0.75\columnwidth]{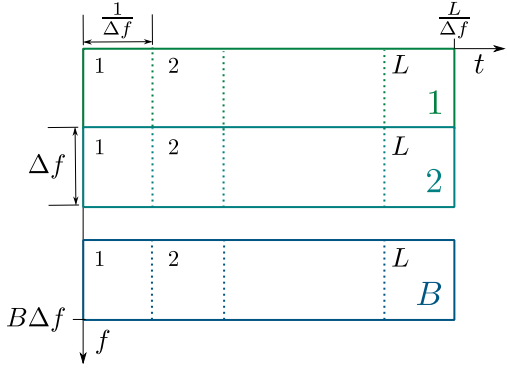}
  \caption{Block-fading frequency-selective channel output with bandwidth $B\Delta f$, coherence length $L$ and codeword time $L/\Delta f$.}
  \label{fig:channelOFDMAb}
\end{figure}

In each receive antenna $n$ the received signal coefficient on subcarrier $b$ at instant $\ell$ is given by
\begin{equation}
\label{eq:NBTchan}
  \begin{split}
    y_{n,\ell}[b]&=h_n[b]x_\ell[b]+z_{n,\ell}[b],\\
    &b\in\{0\dots B-1\},\\
    &n\in\{0\dots N-1\},\\
    &\ell\in\{0\dots L-1\}.
  \end{split}
\end{equation}
where the fading is normalized and independent for each $n$ and $b$, $h_n[b]\sim\mathcal{CN}(0,1)$. The Additive White Gaussian Noise (AWGN) is also normalized as $z_{n,\ell}[b]\sim\mathcal{CN}(0,1)$, and the input average power constraint is 
$$\Ex{}{\frac{1}{L}\sum_{b=1}^{B}\sum_{\ell=0}^{L-1}|x_\ell[b]|^2}\leq P.$$
Here, we use the single letter $P$ to represent a normalized power limitation. In practice $P$ is computed based on the actual transmitter power $P_T$, the channel mean gain $G$, the noise power spectral density $N_o$ and subcarrier bandwidth $\Delta f$ as $P=\frac{P_TG}{\Delta fN_o}$. Thus, $\frac{P}{B}=\frac{P_TG}{B\Delta fN_o}$ is the average SNR in each receive antenna.

Since channel inputs and outputs are independent for each frequency bin $b$, we can write the channel as a collection of $B$ frequency-flat SIMO subchannels 
\begin{equation}\label{eq:channel}
  \Y[b]=\h[b]\x^H[b]+\Z[b],\;\forall b\in\{0\dots B-1\}.
\end{equation}
where $\Y[b],\Z[b] \in\mathbb{C}^{N,L}$, $\h[b]\in\mathbb{C}^{N}$, and $\x[b]\in\mathbb{C}^{L}$ with some probability density function (p.d.f.) $p(\{\x[b]\}_{b=0}^{B-1})$. 

Even though we study the limit $L\to\infty$, we assume the ergodic mutual information is achieved by using the channel for a sufficiently long period of time of length $\gg L/B$, and we only apply constraints on the average power of the distributions. This permits to encode information over a large number of realizations of the block fading process and achieving the ergodic capacities defined as follows:

\begin{definition}
 The \textbf{coherent ergodic capacity} of the block fading Rayleigh channel as a function of $N$, $B$ and $L$ is
 \begin{equation}
 \label{eq:defCc}
 \begin{split}
    &C_c(N,B,L)=
    \\&    \sup_{p(\{\x[b]\}_{b=0}^{B-1})}\frac{\Ex{\{\h[b]\}_{b=0}^{B-1}}{\CInf{\{\x[b]\}_{b=0}^{B-1}}{\{\Y[b]\}_{b=0}^{B-1}}{\{\h[b]\}_{b=0}^{B-1}}}}{L/\Delta f}
    \end{split}
 \end{equation}
\end{definition}
\begin{definition}
 The \textbf{non-coherent ergodic capacity} of the block fading Rayleigh channel as a function of $N$, $B$ and $L$ is
 \begin{equation}
 \label{eq:defCn}    
 \begin{split}
    &C_n(N,B,L)=
    \\&    \sup_{p(\{\x[b]\}_{b=0}^{B-1})}\frac{\Ex{\{\h[b]\}_{b=0}^{B-1}}{\Inf{\{\x[b]\}_{b=0}^{B-1}}{\{\Y[b]\}_{b=0}^{B-1}}}}{L/\Delta f}
    \end{split}
 \end{equation}
\end{definition}
Here, the 
division by $L/\Delta f$
provides the unit conversion to express capacity in units of bits per second rather than bits per block. In this paper, we will study the asymptotic scaling when $B$, $N$ and $L$ all grow to infinity. Therefore, the division by the block length will play an important role in the scaling result. Generally, the capacity definition must consider joint encoding in all $B$ subcarriers, so the supremum is over the joint distribution $p(\{\x[b]\}_{b=0}^{B-1})$. However, in the case with no feedback and no CSIT we will show that, aside from power allocation, independent encoding in each subcarrier is sufficient.

In this paper we characterize the scaling of the functions $C_c(N,B,L)$ and $C_n(N,B,L)$ as the parameters $N$, $B$, and $L$ grow.
Since this is a multi-dimensional asymptotic, the point $(N,B,L)\to(\infty,\infty,\infty)$ can be approached from infinite ``directions''. We specify the relative growth speed of $N$, $B$, and $L$ as they jointly increase with scaling exponents.
\begin{definition}
 We define the \textbf{bandwidth exponent} as
\begin{equation}
  \epsilon = \lim_{B,\,N\rightarrow \infty}
  \frac{\log(B)}{\log(N)},
\label{eq:joint_scalingB}
\end{equation}
for a non-negative constant $\epsilon.$ This may equivalently be expressed as $B=\Theta(N^\epsilon).$
\end{definition}
\begin{definition}
The \textbf{block length exponent} is defined as 
\begin{equation}
  \tau = \lim_{L,\,N\rightarrow \infty}
  \frac{\log(L)}{\log(N)},
\label{eq:joint_scalingL}
\end{equation}
for a non-negative constant $\tau,$ or equivalently $L=\Theta(N^\tau).$
\end{definition}

Using these definitions, we can compare the scaling of $C_c(N,B,L)$ and $C_n(N,B,L)$ as a function of $N$, $\epsilon$ and $\tau$.

\section{Known Results}
\label{sec:knownres}

The comparison of the coherent and non-coherent channel capacities in \cite{8826443} showed that with a fixed $L$, i.e. $\tau=0$, EM achieves the capacity scaling of the non-coherent channel. However, for $\epsilon>\frac{1}{2}$ overspreading occurs, and the non-coherent channel capacity can only scale at most as $\sqrt{N}$, whereas the coherent channel capacity can reach linear scaling with $N$ if $\epsilon\geq 1$.

\begin{lemma}
\label{lem:gold1}
 The capacity scaling of a Coherent Block Fading Rayleigh i.i.d. wideband SIMO channel with $N$ receive antennas, bandwidth $B=\Theta(N^\epsilon)$ and $L=\Theta(N^\tau)$ is
 $$C_c(B,N,L)=\Theta(N^{\min(\epsilon,1)})$$
\end{lemma}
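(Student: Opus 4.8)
The plan is to exploit that, with channel state information at the receiver and no feedback or transmitter CSI, the coherent channel \eqref{eq:channel} splits into $B$ statistically identical parallel frequency-flat SIMO Gaussian subchannels. Since $\h[b]$ is constant over a block, conditioning on it turns each of the $L$ symbols of subcarrier $b$ into an independent use of the same $N$-branch channel with combined SNR $\|\h[b]\|^2 P_b$ for per-symbol power $P_b$; as that conditional subchannel is Gaussian, Gaussian inputs are optimal and the conditional mutual information of one block is $L\log(1+\|\h[b]\|^2 P_b)$. The first point I would stress is that this factor $L$ is exactly cancelled by the normalisation $L/\Delta f$ in \eqref{eq:defCc}, so that $C_c$ is independent of $L$ (and hence of $\tau$), matching the intuition that receiver CSI renders the coherence length irrelevant.

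Next I would fix the power allocation. With no CSIT the allocation $\{P_b\}$ cannot depend on the fading realisation, so $C_c$ equals the maximum of $\Delta f\sum_{b}\Ex{\h}{\log(1+\|\h\|^2 P_b)}$ over deterministic $\{P_b\}$ with $\sum_b P_b\le P$. As $p\mapsto\Ex{\h}{\log(1+\|\h\|^2 p)}$ is concave and all subcarriers are statistically identical, the uniform choice $P_b=P/B$ is optimal, yielding
\begin{equation}
C_c=\Theta\!\left(B\,\Ex{\h}{\log\!\left(1+\|\h\|^2\tfrac{P}{B}\right)}\right),
\end{equation}
where the fixed $\Delta f$ has been absorbed into the constant.

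I would then estimate the expectation from the concentration of $\|\h\|^2$. Since $\|\h\|^2$ is a sum of $N$ i.i.d.\ unit-mean exponentials, $\Ex{}{\|\h\|^2}=N$ and $\|\h\|^2$ concentrates sharply around $N$. Jensen's inequality gives the upper bound $\Ex{\h}{\log(1+\|\h\|^2 P/B)}\le\log(1+NP/B)$, and truncating on the high-probability event $\{\|\h\|^2\ge N/2\}$ gives a lower bound of the same order, so that $\Ex{\h}{\log(1+\|\h\|^2 P/B)}=\Theta(\log(1+NP/B))$ and therefore $C_c=\Theta(B\log(1+NP/B))$. This two-sided control of $\Ex{\h}{\log(\cdot)}$, and in particular the concentration-based lower bound, is the main technical step; the decomposition and power-allocation arguments are routine.

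Finally I would substitute $B=\Theta(N^\epsilon)$ and separate regimes. For $\epsilon\ge 1$ the per-subcarrier SNR $NP/B=\Theta(N^{1-\epsilon})$ stays bounded, whence $B\log(1+NP/B)=\Theta(N)=\Theta(N^{\min(\epsilon,1)})$. For $\epsilon<1$ the SNR $NP/B=\Theta(N^{1-\epsilon})\to\infty$, so $\log(1+NP/B)=\Theta(\log N)$ and $C_c=\Theta(N^\epsilon\log N)$, whose scaling exponent $\lim_{N\to\infty}\log(C_c)/\log(N)$ equals $\epsilon=\min(\epsilon,1)$. I would make explicit that the statement $\Theta(N^{\min(\epsilon,1)})$ is read at the level of the polynomial scaling exponent, so that the sub-polynomial $\log N$ pre-log appearing for $\epsilon<1$ is absorbed; this bookkeeping of the logarithmic factor is the only delicate point beyond the concentration estimate.
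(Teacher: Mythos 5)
Your proof is correct, and the comparison with the paper is somewhat asymmetric: the paper's entire proof of this lemma is the one-line remark that the proof of Lemma 1 in \cite{8826443} (the fixed-$L$ coherent scaling result) ``holds also for $\tau>0$'', whereas you reconstruct the full argument from scratch. The substance that justifies the paper's remark is exactly what you make explicit in your first paragraph: conditioned on $\h[b]$ the block mutual information is $L\log\bigl(1+\|\h[b]\|^2P_b\bigr)$, and this factor $L$ cancels the normalisation $L/\Delta f$ in \eqref{eq:defCc}, so $C_c$ is independent of $L$ and hence of $\tau$ --- which is precisely why the fixed-$L$ proof carries over verbatim. Your remaining steps (uniform power allocation by symmetry and concavity, the two-sided estimate $\Ex{\h}{\log(1+\|\h\|^2P/B)}=\Theta\bigl(\log(1+NP/B)\bigr)$ via Jensen plus concentration of $\|\h\|^2$, and the regime split at $\epsilon=1$) are the standard route and there is no gap in them. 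The one caveat you rightly flag is real: for $\epsilon<1$ the capacity is $\Theta(N^\epsilon\log N)$, so the lemma as stated holds only at the level of the polynomial scaling exponent and, read against the paper's strict Knuth-style definition of $\Theta$, carries a spurious logarithmic factor; this looseness belongs to the paper rather than to you, since the paper adopts the same convention elsewhere (e.g.\ in Section \ref{sec:pa}, where $M\log\bigl(1+\Theta(N^{1-\mu})\bigr)$ with $M=\Theta(N^{\mu})$ is reported as $\Theta(N^{\mu})$).
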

\begin{proof}
 Proof of\cite[Lemma 1]{8826443} holds also for $\tau>0$. 
\end{proof}

\begin{lemma}
\label{lem:gold2}
 The capacity scaling of a Non-Coherent Block Fading Rayleigh i.i.d. wideband SIMO channel with $N$ receive antennas, bandwidth $B=\Theta(N^\epsilon)$  and constant $L$ ($\tau=0$) is
 \begin{equation}
 \label{eq:CnL0}
  C_n(B,N,L)=\Theta(N^{\min(\epsilon,\frac{1}{2})})
 \end{equation}
\end{lemma}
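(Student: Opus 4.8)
The plan is to sandwich $C_n$ between matching bounds whose minimum is $\Theta(N^{\min(\epsilon,1/2)})$, treating $L$, $\Delta f$ and $P$ as constants. Since the subchannels in \eqref{eq:channel} are statistically independent and there is no transmitter CSI, I would first reduce \eqref{eq:defCn} to a single-subcarrier problem: independent inputs across subcarriers with a power allocation $\{\rho_b\}$ obeying $\sum_b\rho_b\le P$ are scaling-optimal, so that $C_n=\Theta\!\big(\tfrac{\Delta f}{L}\sum_{b}\Inf{\x[b]}{\Y[b]}\big)$ and it suffices to bound the non-coherent mutual information of one $N\times L$ Rayleigh block with per-symbol power $\rho_b$.

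For the lower bound I would treat the two regimes separately. When $\epsilon\le\tfrac12$ (i.e. $B\le\sqrt N$) the post-combining SNR per subcarrier is large, each subcarrier carries order-one information, and spreading power over all $B$ subcarriers achieves $\Theta(N^\epsilon)$. When $\epsilon>\tfrac12$ I would use Energy Modulation: the statistic $\sum_{n}|y_{n,\ell}[b]|^2$ concentrates around $N(|x_\ell[b]|^2+1)$ with fluctuations of order $\sqrt N$, so the induced nonnegative scalar channel has effective SNR $\Theta(\sqrt N\,\rho_b)$. At the small per-subcarrier power $\rho_b=\Theta(P/B)$ this yields $\Theta(\sqrt N\,\rho_b)$ per channel use; multiplying by the $L$ uses of a block, summing over the $B$ subcarriers, and dividing by $L/\Delta f$ as in \eqref{eq:defCn}, the factor $B$ cancels and the total rate is $\Theta\!\big(\sqrt N\sum_b\rho_b\big)=\Theta(\sqrt N)$.

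For the upper bound I would combine two ceilings on each per-subcarrier term. Revealing the fading as a genie gives the coherent bound $\Inf{\x[b]}{\Y[b]}\le\CInf{\x[b]}{\Y[b]}{\h[b]}$, which after summation yields the degrees-of-freedom ceiling $O(N^\epsilon)$. The second and decisive ceiling is a non-coherent penalty bound: because the receiver can only resolve the fading-induced energy levels across the $N$ antennas with accuracy of order $\sqrt N$ against a mean of order $N$, the mutual information extracted per unit of transmitted power is limited by a factor $\sqrt N$ rather than the coherent $N$. Formalizing this as $\Inf{\x[b]}{\Y[b]}\le c\,L\sqrt N\,\rho_b$ and summing under $\sum_b\rho_b\le P$ caps the total at $O(\sqrt N)$ irrespective of $B$. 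The minimum of the two ceilings matches the achievability bounds and gives $C_n=\Theta(N^{\min(\epsilon,1/2)})$.

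The main obstacle is the non-coherent penalty ceiling, since it must hold uniformly over all (possibly peaky) input distributions and must isolate the $\sqrt N$ array-gain scaling rather than the coherent $N$; a naive Gaussian or coherent argument only recovers the looser $O(N)$ bound. As this ceiling together with the Energy Modulation achievability is exactly the $\tau=0$ instance already established in \cite{8826443}, the cleanest route is to invoke that result directly, verifying only that its converse uses no relation between $L$ and $N$ beyond $L$ being constant.
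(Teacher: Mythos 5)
Correct, and essentially the same approach as the paper: the paper's proof of Lemma \ref{lem:gold2} consists entirely of deferring to \cite{8826443}, which is precisely your concluding step of invoking that result directly after checking that its converse uses nothing beyond $L$ being constant. Your preceding sketch (subchannel decomposition, EM achievability, and a coherent-DoF ceiling paired with a $\sqrt{N}$ energy-resolution ceiling) faithfully reflects the structure of the cited proof and of the machinery this paper reuses for its generalization, including your correct identification of the non-coherent penalty bound as the genuinely hard step that a naive coherent or Gaussian argument cannot deliver.
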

\begin{proof}
 The full proof is given in \cite{8826443}. 
\end{proof}

The interest in EM techniques can be associated with the seminal work by Marzetta and Hochwald \cite{Marzetta1999}, which shows that the capacity-achieving input in the non-coherent channel is a product of an EM of unspecified density function, times an Isotropically Distributed Unitary Vector (IDUV) distribution.
\begin{lemma}
\label{lem:marzetta}
 \textbf{\cite[Theorem 2]{Marzetta1999}} The optimal distribution for a frequency-flat Rayleigh block fading $1\times N$ SIMO (sub)channel is $\x[b]=\sqrt{a[b]} \uu[b]$ where $a[b]$ and $\uu[b]$ are independent, $a[b]$ is a non-negative real number distribution and $\uu[b]$ is IDUV.
\end{lemma}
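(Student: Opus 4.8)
The plan is to exploit the unitary symmetry of the single-subcarrier channel together with the concavity of mutual information, following the structure of Marzetta and Hochwald's argument specialized to the single-transmit-antenna case, where it simplifies considerably. Throughout I fix one subcarrier and suppress the index $[b]$, writing the channel as $\Y=\h\x^H+\Z$ with $\x\in\mathbb{C}^L$. The first step I would carry out is to show that the channel law depends on the input only through $\x\x^H$: conditioned on $\x$, the matrix $\Y$ is zero-mean complex Gaussian whose $N$ rows are mutually independent and identically distributed, each with covariance $\I_L+\x\x^H$ (using $\h\sim\mathcal{CN}(0,\I_N)$, the i.i.d. $\mathcal{CN}(0,1)$ entries of $\Z$, and $\mathbb{E}[|h_n|^2]=1$). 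Hence $p(\Y\mid\x)=p(\Y\mid\x\x^H)$, which isolates the magnitude and direction of $\x$ as the only relevant features.

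Second, I would record the rotational symmetry this induces. For any deterministic $L\times L$ unitary $\Phib$, the right multiplication $\Y\mapsto\Y\Phib^H$ is a measure-preserving bijection that maps the conditional law $p(\Y\mid\x)$ onto $p(\Y\mid\Phib\x)$, since it sends the covariance $\I_L+\x\x^H$ to $\I_L+\Phib\x\x^H\Phib^H$. Consequently $\Inf{\x}{\Y}$ is unchanged when the input law is replaced by the law of $\Phib\x$, and the power constraint survives because $\|\Phib\x\|=\|\x\|$.

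Third comes the symmetrization step, which I expect to be the main obstacle to make fully rigorous. Taking any capacity-achieving input law, I would average its rotated copies over $\Phib$ drawn from the Haar measure on the unitary group $U(L)$, independently of $\x$. Because mutual information is a concave functional of the input distribution for a fixed channel, Jensen's inequality shows that the resulting mixture law attains at least the original value and is therefore also optimal, while still meeting the power constraint. By construction this averaged law is invariant under every $\Phib\in U(L)$. The care required is in justifying the continuous mixture over the compact group and in the concavity argument, but neither of these is deep.

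Finally I would characterize unitary-invariant laws on $\mathbb{C}^L$. Writing $\x=\sqrt{a}\,\uu$ with $a=\|\x\|^2$ and $\uu=\x/\|\x\|$ (setting $\uu$ arbitrarily on the null event $\x=\zero$), invariance of the law of $\x$ under all of $U(L)$ forces the conditional law of $\uu$ given $a$ to be the uniform law on the unit sphere of $\mathbb{C}^L$, i.e. an IDUV, identical for every value of $a$; hence $\uu$ is independent of $a$. This is precisely the claimed product form, with $a$ a non-negative real distribution and $\uu$ an independent IDUV, which completes the proof.
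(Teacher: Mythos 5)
Your proposal is correct, but note that the paper itself offers no proof of this lemma: it is quoted verbatim as \cite[Theorem 2]{Marzetta1999}, so the only ``paper proof'' is the citation. What you have written is precisely the Marzetta--Hochwald argument specialized to one transmit antenna --- the observation that $p(\Y\mid\x)$ depends on $\x$ only through $\I_L+\x\x^H$, the unitary symmetry $\Y\mapsto\Y\Phib^H$, symmetrization via concavity of mutual information over the Haar measure on $U(L)$, and the characterization of rotation-invariant laws as $\sqrt{a}\,\uu$ with $\uu$ an independent IDUV --- so it matches the approach of the cited source rather than departing from it. The only technical slack is your phrase ``taking any capacity-achieving input law,'' which presupposes an optimizer exists; this is avoidable by applying the same symmetrization to an arbitrary (near-optimal) law, concluding that the supremum over unitarily invariant laws equals the capacity, which is all the lemma needs.
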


\begin{remark}
Comparing Lemmas \ref{lem:gold2} and \ref{lem:marzetta}, it would appear that in non-coherent channels of large dimensions, information is mostly carried by the EM part of the input, $a[b]$. In our work we will show that this interpretation may be an artifact caused by the assumption of a constant block length $L$ in the previous works. Indeed, this is the length of the IDUV $\uu[b]$, and when $L$ is relatively large compared to $B$ and $N$ the role of the ``input shape'', $\uu[b]$, in capacity scaling is not negligible.
\end{remark}

The achievable scheme for Lemma \ref{lem:gold2} is the EM technique proposed in \cite{Chowdhury2014a,8826443}. EM assumes that $M\leq B$ subchannels are used, each with an average transmitted power $\Ex{}{|\x[b]|^2}=LP/M$. The transmitter uses a real positive energy constellation to modulate the energy of the input, selecting a symbol from the set
$$a[b]\in \mathcal{C_M}=\left \{ 0, 2d, 4d,\cdots,\frac{2}{M} \right \},$$
and creating a transmitted signal by repeating the symbol $L$ times
$$\x[b]=\sqrt{a[b]}\one_{L}.$$

To perform detection, the receiver computes the statistic 
$$v[b]=\sum_{n=0}^{N-1}\left|\frac{1}{L}\sum_{\ell=0}^{L-1}y_{n,\ell}[b]\right|^2\sim (a[b]^2+ \frac{1}{L})\chi^2(2N)$$
and decides the symbol $\hat{a}[b]$ that is nearest to $v[b]- \frac{1}{L}$. As the transmitted symbol is repeated $L$ times, it is trivial to adjust the analysis of \cite[Theorem 2]{8826443} to the case with non-constant $L=\Theta(N^{\tau})$ as follows:
\begin{lemma}
 \label{lem:em}
 In EM with $M=\Theta(N^{\min(\epsilon,\frac{1}{2}+\tau)})$ and $d=\Theta(N^{-t})$ satisfying $\epsilon<t<\frac{1}{2}+\tau$, the probability of error vanishes as $N\to\infty$ and rate scales as $\Theta(N^{\min(\epsilon-\tau,\frac{1}{2})})$.
\end{lemma}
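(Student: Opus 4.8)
The plan is to do exactly what the statement advertises: re-run the symbol-error and rate computation of \cite[Theorem 2]{8826443}, tracking the one place where a growing $L=\Theta(N^\tau)$ alters the bookkeeping. The starting point is the detection statistic $v[b]$. Conditioned on the transmitted level, it is a scaled $\chi^2(2N)$ variable whose scale is the per-antenna received power, i.e. the transmitted energy plus the post-averaging noise floor $\tfrac1L$; because it is a sum of $N$ i.i.d. terms, its standard deviation sits a factor $\Theta(1/\sqrt N)$ below its mean. Rescaling and removing the known bias $\tfrac1L$ yields an (essentially) unbiased estimate of the transmitted level whose fluctuation has the multiplicative form $\Theta\!\big((\text{level}+\tfrac1L)/\sqrt N\big)$. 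The only novelty versus the $\tau=0$ analysis is that averaging the $L$ repetitions collapses the additive noise to the floor $\tfrac1L=\Theta(N^{-\tau})$ instead of a constant, which is precisely what shifts every threshold by $+\tau$.

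First I would bound the symbol-error probability. A detection error on subchannel $b$ requires the rescaled statistic to deviate from its mean by more than the half-spacing $d$. The largest fluctuation occurs at the top constellation point $a_{\max}=\tfrac{2}{M}=\Theta(N^{-\min(\epsilon,1/2+\tau)})$, so the worst-case relative deviation is $d/(a_{\max}+\tfrac1L)$, and the sub-exponential concentration of $\chi^2(2N)$ gives a tail of order $\exp\!\big(-cN\,(d/(a_{\max}+\tfrac1L))^2\big)$. Since $a_{\max}+\tfrac1L=\Theta(N^{-\min(\epsilon,\tau)})$ and, in the regime where EM is meaningful, the floor $\tfrac1L$ dominates, this exponent behaves like $N^{1+2\tau-2t}$. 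I would then take a union bound over the $M\le B$ active subchannels and the $\Theta(N^{t-\epsilon})$ levels — a factor polynomial in $N$, hence only $\Theta(\log N)$ in the exponent — and conclude that the total error probability vanishes exactly when $1+2\tau-2t>0$, i.e. $t<\tfrac12+\tau$. The lower bound $t>\epsilon$ enters only to force $Md<1$, guaranteeing a non-degenerate constellation with at least two (in fact $\Theta(N^{t-\epsilon})$) levels.

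Next I would assemble the achievable rate. In each coherence block the transmitter emits one symbol on each of the $M$ active subchannels, drawn from a constellation of $\tfrac{1}{Md}+1$ levels, and the block occupies $L/\Delta f$ seconds; hence the rate is $\Theta\!\big(\tfrac{\Delta f}{L}\,M\log(\tfrac{1}{Md}+1)\big)$. Substituting $M=\Theta(N^{\min(\epsilon,1/2+\tau)})$ and $L=\Theta(N^\tau)$ and using the elementary identity $\min(\epsilon,\tfrac12+\tau)-\tau=\min(\epsilon-\tau,\tfrac12)$ gives $\tfrac{M}{L}=\Theta(N^{\min(\epsilon-\tau,1/2)})$, which is the claimed scaling (the $\log(1/Md)=\Theta(\log N)$ bits-per-symbol factor being subleading in the exponent, just as in \cite{8826443}).

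The step I expect to be the main obstacle is the error analysis of the second paragraph: one must verify that the chi-squared statistic concentrates tightly enough that a union bound over all $\Theta(\mathrm{poly}(N))$ decision events still vanishes, and in doing so correctly identify the \emph{signal-dependent} (multiplicative) scale of the fluctuation. Pinning down the worst-case level — recognizing that the binding constraint comes from $a_{\max}+\tfrac1L$, and that once $L$ grows the averaged floor $\tfrac1L=\Theta(N^{-\tau})$ replaces the constant floor of the $\tau=0$ case — is what produces the admissible window $\epsilon<t<\tfrac12+\tau$ and, through it, the rate exponent $\min(\epsilon-\tau,\tfrac12)$.
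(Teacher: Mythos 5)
Your proposal is correct and follows essentially the same route as the paper's proof: the paper likewise argues that the $L$-fold repetition lowers the post-averaging noise floor to $\tfrac1L=\Theta(N^{-\tau})$, shifting the admissible spacing window from $t<\tfrac12$ to $t<\tfrac12+\tau$, and then computes the rate as $\frac{M}{L}\log_2(N^t/M)$ with $M=\Theta(N^{\min(\epsilon,\tau+\frac12)})$, exactly your second paragraph. The only difference is one of detail: the paper delegates the $\chi^2(2N)$ concentration and union-bound step to \cite[Theorem 2]{8826443} with a one-line remark, whereas you re-derive it explicitly (correctly identifying the signal-dependent fluctuation scale $a_{\max}+\tfrac1L$), which is a faithful unpacking rather than a different argument.
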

\begin{proof}
Since one data symbol is repeated $L$ times, the error probability upper bound computed in \cite[theorem 2]{8826443} vanishes for any inter-symbol distance that satisfies $t<\frac{1}{2}+\tau$ (this was $t<\frac{1}{2}$ with $\tau=0$ in the original proof). The constellation size is $\frac{N^{t}}{M}$, there are $M$ subchannels and a $1/L$ repetition code, so the rate is $\frac{M}{L}\log_2(\frac{N^t}{M})$. For any $\epsilon>0$ a rate scaling of $\Theta(N^{\min(\epsilon-\tau,\frac{1}{2})})$ is achievable, by choosing $M=\Theta(N^{\min(\epsilon,\tau+\frac{1}{2})})$ subchannels.
\end{proof}
\begin{remark}
 For $\tau=0$, the EM scheme achieves the non-coherent channel capacity scaling (Lemma \ref{lem:gold2}), and if we also assume $\epsilon<\frac{1}{2}$, the coherent channel capacity scaling is achieved (Lemma \ref{lem:gold1}).
\end{remark}

The main motivator to introduce an extension of Lemma \ref{lem:gold2} featuring a variable coherence block length is the exact calculation of the critical bandwidth for Rayleigh channels with fixed values of $B,N,L$, introduced in \cite{journals/twc/LozanoP12,fgomezUnified}, reproduced by the following lemma

\begin{lemma}
\label{lem:med1}
 In a Non-Coherent Block Fading Rayleigh i.i.d. wideband SIMO channel with constant $N$, $B$, and $L$, there is a ``critical bandwidth'' number $B_{crit}$, contained in the interval 
 \begin{equation}
 \label{eq:critB}
    2P\sqrt{\frac{\log\pi}{1+N}\frac{L}{\log L}}\leq B_{crit} \leq 2P\sqrt{(1+N)\log\pi\frac{L}{\log L}}
 \end{equation}
 and such that the achievable rate with non-peaky signaling is maximum when $B=B_{crit}$ and decreases for $B>B_{crit}$ (overspreading).
\end{lemma}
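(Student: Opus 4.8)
The plan is to collapse the three frequency/time/antenna dimensions down to a single scalar optimization in $B$. Because the subchannels in \eqref{eq:channel} are mutually independent and the power constraint is separable, and because Lemma \ref{lem:marzetta} lets me restrict attention to inputs $\x[b]=\sqrt{a[b]}\,\uu[b]$ with $\uu[b]$ IDUV, ``non-peaky'' signaling amounts to a bounded normalised fourth moment of the energy $a[b]$. Splitting power equally over the $B$ subchannels gives per-antenna per-symbol SNR $\rho=P/B$, so the per-second achievable rate is, up to the fixed $\Delta f/L$ prefactor that does not affect its argmax, proportional to $B\,I(\rho)$, where $I(\rho)$ is the mutual information of a single block-Rayleigh SIMO subchannel at per-symbol SNR $\rho$, optimised over non-peaky energy profiles. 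The whole problem is then to understand the shape of $\rho\mapsto I(\rho)$ and to maximise $B\,I(P/B)$ over $B$.

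The key technical step is to pin down $I(\rho)$ tightly. Conditioned on $(a,\uu)$ the $N$ rows of $\Y[b]$ are i.i.d.\ $\mathcal{CN}(\mathbf{0},\I_L+a\,\uu^*\uu^T)$, whose only nonzero eigenvalue is $a=\|\x[b]\|^2$, so the conditional output entropy is exactly $h(\Y\mid\x)=N\,\Ex{}{\log(1+a)}+\text{const}$, while the unconditional row covariance is the isotropic $(1+\rho)\I_L$, giving the maximum-entropy upper bound $h(\Y)\le NL\log\!\big(\pi e(1+\rho)\big)$ together with a matching mixture lower bound. The qualitative fact I need from this is that, unlike the unconstrained (peaky) capacity, the non-peaky $I(\rho)$ is \emph{S-shaped}: at small $\rho$ the first-order terms cancel and $I(\rho)=\Theta(\rho^2)$ (this is exactly the overspreading penalty, convex in $\rho$), whereas at large $\rho$ it is concave and grows only logarithmically like $N\log(L\rho)$.

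With this shape I would locate $B_{crit}$ from stationarity of the rate. Since the rate is proportional to $B\,I(P/B)$, its derivative in $B$ is proportional to $g(\rho)=I(\rho)-\rho\,I'(\rho)$ with $\rho=P/B$. Here $g(0)=0$ and $g'(\rho)=-\rho\,I''(\rho)$, which is negative on the convex small-SNR branch and positive on the concave large-SNR branch; hence $g$ first dips negative and then rises, crossing zero exactly once at some $\rho^\star>0$, with $g<0$ for $\rho<\rho^\star$ and $g>0$ for $\rho>\rho^\star$. Translating through $\rho=P/B$, the rate therefore increases for $B<B_{crit}$ and decreases for $B>B_{crit}$, with the unique maximiser $B_{crit}=P/\rho^\star$ fixed by the tangent-from-the-origin condition $I(\rho^\star)=\rho^\star I'(\rho^\star)$. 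This already yields the ``maximum at $B_{crit}$, decreasing beyond'' statement of the lemma.

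Finally, the interval \eqref{eq:critB} would follow by sandwiching $\rho^\star$ using the two-sided bounds on $I(\rho)$ from the second step, via the elementary $\tfrac{x}{1+x}\le\log(1+x)\le x$. The scaling $\rho^\star=\Theta\!\big(\sqrt{(\log L)/L}\big)$ is what produces $B_{crit}=\Theta\!\big(P\sqrt{L/\log L}\big)$: coherent averaging over the $L$ symbols of a block lowers the effective noise floor by a factor $1/L$, while the $\log L$ reflects the energy needed for reliable detection across the block; the $\log\pi$ is the Rayleigh-mixture constant, and the factor-$(1+N)$ gap between the two sides mirrors the crudeness with which the $N$-antenna combining gain is bounded from above and below in the transition region. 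I expect this last step to be the main obstacle: the mixture entropy $h(\Y)$ has no closed form, so the convex-to-concave crossover of $I(\rho)$ must be controlled sharply enough on both sides to trap $\rho^\star$—and hence $B_{crit}$—within those explicit constants, which is precisely where the Rayleigh-specific estimates of \cite{journals/twc/LozanoP12,fgomezUnified} do the heavy lifting.
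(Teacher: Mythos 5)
First, be aware of what the paper actually does here: it does not prove this lemma at all. Its stated proof is a one-line citation (``Lemma 4 of \cite{fgomezUnified} reproduced with our notation''), so the lemma's content --- including the explicit interval \eqref{eq:critB} --- is imported wholesale from prior work. Your attempt is therefore implicitly being measured against the full argument of \cite{fgomezUnified,journals/twc/LozanoP12}, not against anything contained in this paper.

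Measured that way, your sketch has the right skeleton but a genuine gap exactly where the lemma's content lies. The reduction to maximizing $B\,I(P/B)$, the tangent condition $I(\rho^\star)=\rho^\star I'(\rho^\star)$, and the sign analysis of $g(\rho)=I(\rho)-\rho\,I'(\rho)$ are sound, but they deliver unimodality of the rate in $B$ only under your unproven structural assumption that $I(\rho)$ is convex-then-concave with a \emph{single} inflection point. Since $I(\rho)$ is itself defined through an optimization over non-peaky input distributions, neither its convexity near $\rho=0$ nor the uniqueness of the inflection point is evident; if $g$ could cross zero more than once, the ``maximum at $B_{crit}$, decreasing beyond'' conclusion collapses. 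More importantly, the quantitative statement --- that $B_{crit}$ lies in \eqref{eq:critB}, with the specific constants $2P$, $\log\pi$, and the reciprocal $(1+N)$ factors on the two sides --- is never derived: you explicitly defer it to ``the Rayleigh-specific estimates of \cite{journals/twc/LozanoP12,fgomezUnified}'', which is circular, because those estimates \emph{are} the result being proved. A complete proof would need two-sided bounds on the non-peaky mutual information of the Rayleigh block-fading subchannel that are sharp enough to trap $\rho^\star$ within a factor of $(1+N)$ in the convex-to-concave transition region, and that heavy lifting is precisely what your sketch leaves undone.
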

\begin{proof}
 Lemma 4 of \cite{fgomezUnified} reproduced with our notation.
\end{proof}

By observing the right hand side of \eqref{eq:critB} in Lemma \ref{lem:med1}, in comparison to the two capacity scaling regimes of \eqref{eq:CnL0} in Lemma \ref{lem:gold2}, it is clear that both results indicate that the maximum bandwidth scaling is proportional to $\Theta(\sqrt{N})$ when $L$ is fixed. By observing the role of $L$ in \eqref{eq:critB}, it is therefore natural to expect that, for non-constant $L$ that increases as $L=\Theta(N^\tau)$, an extension of Lemma \ref{lem:gold2} in which the critical bandwidth is proportional to $\Theta(\sqrt{NL})$ should exist. Although this is intuitive, the proof will be the main result of our paper. Significant mathematical differences versus \cite{journals/twc/LozanoP12,fgomezUnified,8826443} are required, described in Appendix \ref{sec:sub}.

\section{Main Result}
\label{sec:mainres}
\subsection{Capacity Scaling}
The main result in our paper introduces the effect of $L$ into the non-coherent capacity, fully characterizing its scaling.

\begin{theorem}
\label{th:cap}
 The capacity of a Non-Coherent Block Fading Rayleigh i.i.d. wideband SIMO channel with $N$ receive antennas, bandwidth $B=\Theta(N^\epsilon)$  and block length $L=\Theta(N^\tau)$ scales as
 $$C_n(B,N,L)=\Theta(N^{\min(\epsilon,\frac{1+\tau}{2},1)})$$
\end{theorem}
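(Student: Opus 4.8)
The plan is to prove the $\Theta$ scaling by establishing matching upper (converse) and lower (achievability) bounds, each of which must reproduce the three-way minimum. The exponent $\epsilon$ corresponds to the bandwidth-limited (high-SNR-like) regime, the exponent $1$ to the antenna-limited regime shared with the coherent channel, and the new exponent $\frac{1+\tau}{2}$ (i.e.\ $\sqrt{NL}$) to the overspreading-limited regime introduced by the finite but growing coherence length. I would treat the converse and the achievability separately and then intersect the two to obtain the stated scaling.

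For the converse, I would first invoke the trivial ordering $C_n(B,N,L)\le C_c(B,N,L)=\Theta(N^{\min(\epsilon,1)})$ from Lemma~\ref{lem:gold1}, which already delivers the $\min(\epsilon,1)$ part of the exponent. The substantive work is the additional bound $C_n(B,N,L)\le\Theta(N^{\frac{1+\tau}{2}})$. Here I would use Marzetta's input structure (Lemma~\ref{lem:marzetta}), writing $\x[b]=\sqrt{a[b]}\uu[b]$, so that the per-subchannel mutual information $\Inf{\x[b]}{\Y[b]}$ can be analyzed through the output covariance it induces, and then carry out the low-SNR/wideband expansion underlying the critical-bandwidth computation of Lemma~\ref{lem:med1}. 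The point is that the linear-in-received-energy term is offset by a negative quadratic penalty whose size is governed by the number of coherence dimensions $\sim NL$; maximizing the resulting expression jointly over the per-subcarrier power allocation and over the number of active subcarriers pins the optimum at the critical bandwidth $B_{crit}=\Theta(\sqrt{NL})$ and caps the rate at $\Theta(N^{\frac{1+\tau}{2}})$. A separate argument is needed to show that peaky inputs cannot escape this cap: following the ``critical bandwidth occupancy'' principle, concentrating energy into fewer symbols merely relocates overspreading to the time axis without improving the scaling.

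For the achievability, I would exhibit a Pilot-Assisted scheme and optimize it in each regime. Because the channel is SIMO, a single (possibly low-energy) pilot symbol per coherence block already yields one noisy observation of every coefficient of $\h[b]$ across the $N$ antennas, so the pilot overhead in time is a vanishing fraction of $L$. The crucial observation is that after maximum-ratio combining with the (even poor) estimate $\hat\h[b]$, the effective SNR is $\frac{p_d\|\hat\h[b]\|^2}{1+p_d\sigma_e^2}\approx p_d N E_p$ for weak pilot energy $E_p<1$ and data power $p_d$: the self-interference from estimation error does not inherit the array gain, so the denominator stays $\Theta(1)$. Allocating the energy budget $PL$ across $M$ subcarriers with $E_p=\Theta(N^{(\tau-1)/2})$ and $p_d=\Theta(N^{-(1+\tau)/2})$ activates $M=\Theta(\sqrt{NL})$ subcarriers each at $\Theta(1)$ effective SNR, giving $C_n=\Theta(N^{\frac{1+\tau}{2}})$ in the wideband regime; when $B<\sqrt{NL}$ (i.e.\ $\epsilon<\frac{1+\tau}{2}$) the scheme instead uses all $B=\Theta(N^\epsilon)$ subcarriers and reaches $\Theta(N^\epsilon)$; and when $\tau\ge1$ (so $\frac{1+\tau}{2}\ge1$) a stronger-pilot allocation recovers the coherent scaling $\Theta(N^{\min(\epsilon,1)})$. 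Taking the best of these allocations matches the converse in every regime.

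I expect the converse bound $C_n\le\Theta(N^{\frac{1+\tau}{2}})$ to be the main obstacle. The difficulty is that Lemma~\ref{lem:med1} and its sources fix $L$, whereas here $L=\Theta(N^\tau)\to\infty$ jointly with $N$ and $B$, so the interplay between the array dimension and the coherence dimension in the quadratic overspreading penalty must be tracked carefully, and the bound must hold uniformly against all admissible inputs, including peaky ones. This is exactly the technical content I would expect to require the ``significant mathematical differences'' alluded to in the reference to Appendix~\ref{sec:sub}; the achievability, by contrast, is a comparatively routine (if delicate) power- and pilot-allocation optimization.
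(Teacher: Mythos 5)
Your achievability argument is essentially the paper's own Pilot-Assisted scheme (Section \ref{sec:pa}): one pilot symbol per coherence block per subchannel, MMSE estimation, MRC treating the estimate as true, and the key observation that the estimation-error self-interference does not enjoy the array gain, so the post-combining noise stays $\Theta(1)$. Your power split is the same as the paper's optimized one: with $M=\Theta(\sqrt{NL})$ the paper has $\rho=PL/M=\Theta(N^{(\tau-1)/2})$, $\kappa=\rho/(L-1)=\Theta(N^{-(1+\tau)/2})$ and $\alpha^*\approx 1/2$, which reproduces your $E_p=\Theta(N^{(\tau-1)/2})$, $p_d=\Theta(N^{-(1+\tau)/2})$, a $\Theta(1)$ decider SNR, and the rate $\Theta(N^{\min(\epsilon,\frac{1+\tau}{2},1)})$; your $\tau\ge 1$ branch matches the paper's case $\mu\le\tau$. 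This half of the proposal is sound.

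The converse is where there is a genuine gap. You propose to obtain the $\frac{1+\tau}{2}$ cap by extending the critical-bandwidth computation of Lemma \ref{lem:med1}, but that lemma is an \emph{achievability-side} statement: it locates the maximum of the rate attainable with \emph{non-peaky} signaling, and the fourth-moment (``linear term minus quadratic penalty'') expansions underlying it place no constraint on peaky inputs, whose quadratic penalty can be made negligible while preserving the linear term. So no expansion of this type can, by itself, upper bound the supremum over all $p(\{\x[b]\}_{b=0}^{B-1})$; you flag peaky inputs and invoke ``critical bandwidth occupancy,'' but that result again pertains to specific duty-cycled signaling frameworks, not arbitrary inputs, so the loophole you name is exactly the one your plan leaves open. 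The paper closes it with a different construction, which is the missing idea: split the mutual information by the chain rule over the Marzetta--Hochwald decomposition of Lemma \ref{lem:marzetta} into an energy term and a shape term, $C_n\le C_E+C_S$ as in \eqref{eq:twotermCUB}; bound $C_E$ by the Markov-chain reduction $a\to\vv\to\Y$ to an $L=1$ SIMO channel (Appendix \ref{sec:eub}, giving Theorem \ref{th:energycap}); and bound the shape term pointwise in $a$ by $\Phi(a)=\min\left[\Phi_1(a),\Phi_2(a)\right]$, where $\Phi_1(a)=\frac{a^2}{1+a}N\left(1-\frac{1}{L}\right)$ follows from an explicit computation of $p(\Y|\uu,a)$ and a Jensen lower bound on $p(\Y|a)$, while $\Phi_2(a)=L\log(1+aN)$ is the coherent cap that defeats peaky energy distributions. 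Crucially, $\sup_{p(a)}\Ex{a}{\Phi_1(a)}$ alone is too loose (it is attained by a vanishing-duty-cycle peaky $p(a)$), so the paper must optimize $\Ex{a}{\Phi(a)}$ by splitting $p(a)$ at the crossover point $a_o$ of \eqref{eq:defa0} into the convex and concave regions of $\Phi$, and only then optimize over $M$. Without $\Phi_2$ (or an equivalent uniform cap) and this two-region optimization, your converse does not close; this construction is precisely the ``significant mathematical differences'' versus the sources of Lemma \ref{lem:med1} that the paper alludes to.
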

\begin{proof}
 This rate scaling is achievable using the PA scheme described in Sec. \ref{sec:pa}. Moreover, the same exponents are observed in the upper bound in Sec. \ref{sec:ub}. Therefore, the capacity scaling exponent is fully characterized.
\end{proof}

\begin{remark}
 When $\tau\geq 1$, the PA scheme can fully achieve the same capacity scaling as the coherent channel for any value of $\epsilon$. For $\tau<1$, the capacity scaling of the coherent channel is only matched by the non-coherent channel if its bandwidth is not too large, i.e. $\epsilon<\frac{1+\tau}{2}$.
\end{remark}

\begin{remark}
The comparison of Theorem \ref{th:cap} with Lemma \ref{lem:em} indicates that EM does not achieve the non-coherent capacity scaling for any value of $\epsilon$ if $\tau>0$. Even if both the EM and PA schemes can admit bandwidth exponents up to $\epsilon\leq\frac{1}{2}+\tau$ and  $\epsilon\leq\frac{1+\tau}{2}$, respectively, the EM scheme only achieves the rate exponent $\min(\epsilon-\tau,\frac{1}{2})$. Therefore, when $\tau>0$ the EM scheme displays poor spectral efficiency and does not match the rate scaling of the PA scheme, nor the non-coherent channel capacity.
\end{remark}

The interpretation of the main result is more intuitive by imagining the growth of rates in a channel with a very large but finite number of receive antennas ($N$), with a fixed coherence length ($L$), and with a very large bandwidth ($B\to\infty$) so that rate is always in the wideband ``power limited'' regime with overspreading. In essence, the main result warns us that the term ``scaling with $N$'' must be always considered \textit{in comparison to $L$}. We illustrate in Fig. \ref{fig:scalings} the rate versus the number of receive antennas for the special case of our result with $\epsilon\geq1$. In this special case the coherent capacity scales as $\Theta(N)$, the non-coherent capacity upper bound and the achievable rate with PA both scale as $\Theta(\min(\sqrt{NL},N))$, and the EM rate scales as $\Theta(\sqrt{N})$. We can thus see three regions for the non-coherent capacity scaling: First, the rates grow nearly linearly with $N$ when $N\ll L$, nearly matching the coherent channel capacity scaling. Second, when $L\ll N\ll L ^2$, the non-coherent channel capacity cannot continue to match the coherent channel, but the PA rate and the upper bound grow more than $\sqrt{N}$. In this regime PA achieves the non-coherent capacity scaling and outperforms EM. Finally, when $ L^2\ll N$, both the upper bound and PA achievable rate grow as $\sqrt{N}$, and the EM scheme achieves capacity scaling too. The prior scaling analysis assuming $L$ was a constant in \cite{8826443} reflected only this third regime. Nevertheless, in many channels $L$ is very large and the first two regimes have practical importance. For example, reasonably practical values of the coherence time $\frac{L}{\Delta f}=1 m s$ and delay spread $\frac{1}{\Delta f}=1\mu s$, would lead to $L=10^3$, which means that the third regime where EM is optimal is not achieved until $N> 10^6$. On the other hand, in a practical scenario with very high speed mobility where $L=2$ the third regime would appear quite soon, at $N\geq 4$. For $L=1$, the first two regimes would not exist at all.

\begin{figure}
 \centering
  \includegraphics[width=.55\columnwidth]{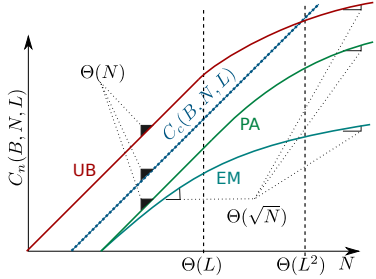}
  \caption{Scaling of rate vs $N$, with fixed $L$ and very large $B$}
  \label{fig:regionsS}
 \label{fig:scalings}
\end{figure}

\subsection{The Capacity with Only Energy Detection}

The energy-shape separation of the input in Lemma \ref{lem:marzetta}, and the ability of EM to achieve the non-coherent channel capacity scaling when $\tau=0$ in Lemma \ref{lem:gold2}, have arisen interest in encoding schemes that rely solely on encoding information using the instantaneous energy per transmitted block ($a[b]$ in lemma \ref{lem:marzetta}).
\begin{definition}
\label{def:oed}
We define an Only Energy Detection (OED) input as any input distribution that only encodes information in the input energy variable $a[b]\triangleq\|\x[b]\|^2$ according to the input decomposition $\x[b]=\sqrt{a[b]}\uu[b]$ proposed by Lemma \ref{lem:marzetta}. This definition comprehends both EM and also other modulations that satisfy this pattern, such as for example a Frequency Shift Keying (FSK) that holds the same frequency in the entire interval $\ell\in\{0\dots L-1\}$
\end{definition}
As an intermediate step of the converse proof in Theorem \ref{th:cap}, we have shown that no OED schemes can achieve a rate scaling greater than that of the EM scheme in Lemma \ref{lem:em}.

\begin{theorem}
\label{th:energycap}
In a Non-Coherent Block Fading Rayleigh i.i.d. wideband SIMO channel \eqref{eq:channel}, the OED Capacity defined as
\begin{equation*}
 \begin{split}
  &C_E(B,N,L)\triangleq 
\\&\sup_{p(\{a[b]\}_{b=0}^{B-1})}\frac{\Ex{\{\h[b]\}_{b=0}^{B-1}}{\Inf{\{a[b]\}_{b=0}^{B-1}}{\{\Y[b]\}_{b=0}^{B-1}}}}{L/\Delta f},\\
 \end{split}
\end{equation*}
for $N$ receive antennas, bandwidth $B=\Theta(n^\epsilon)$ and $L=\Theta(N^\tau)$, scales as
\begin{equation}
\label{eq:CEscaling}
 C_E(B,N,L)= \Theta(N^{\min(\epsilon-\tau,\frac{1}{2})})
\end{equation}
\end{theorem}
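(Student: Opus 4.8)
Since the claim is a two-sided scaling law, I would establish a matching lower (achievability) and upper (converse) bound. The lower bound is immediate from the preceding material: by Definition \ref{def:oed} the Energy Modulation constellation of Lemma \ref{lem:em} is a valid OED input, since it transmits $\x[b]=\sqrt{a[b]}\,\one_L$, i.e. the Marzetta--Hochwald form $\sqrt{a[b]}\uu[b]$ with the fixed unit shape $\uu[b]=\one_L/\sqrt{L}$, and encodes information only through $a[b]=\|\x[b]\|^2$. Hence $C_E(B,N,L)$ is bounded below by the EM rate, which scales as $\Theta(N^{\min(\epsilon-\tau,1/2)})$ by Lemma \ref{lem:em}. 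It remains to prove the matching converse.

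For the converse I would first reduce the vector channel to a scalar energy channel in each subcarrier. Because the $B$ subchannels in \eqref{eq:channel} are independent given the inputs, subadditivity of differential entropy gives $\Inf{\{a[b]\}_{b=0}^{B-1}}{\{\Y[b]\}_{b=0}^{B-1}}\le\sum_b\Inf{a[b]}{\Y[b]}$ (this also justifies that independent per-subcarrier encoding is sufficient). For a single subcarrier, writing $\x[b]=\sqrt{a[b]}\uu[b]$ with $\|\uu[b]\|=1$ and revealing $\uu[b]$ to the receiver as genie side information—which cannot decrease mutual information and, since an OED input makes $a[b]$ independent of $\uu[b]$, yields $\Inf{a[b]}{\Y[b]}\le\CInf{a[b]}{\Y[b]}{\uu[b]}$—I would project the observation onto $\uu[b]$. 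Then $\Y[b]\uu[b]=\sqrt{a[b]}\,\h[b]+\Z[b]\uu[b]\sim\mathcal{CN}(\zero,(a[b]+1)\I_N)$, the orthogonal part is signal-independent noise, and the energy $g[b]\triangleq\|\Y[b]\uu[b]\|^2\sim(a[b]+1)\tfrac12\chi^2(2N)$ is a sufficient statistic for $a[b]$. This reduces each subcarrier to the scalar multiplicative channel $\Inf{a[b]}{\Y[b]}\le\Inf{a[b]}{g[b]}$ driven by a Gamma-distributed gain, under the total budget $\sum_b\Ex{}{a[b]}\le LP$ inherited from the power constraint.

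The heart of the converse is a tight estimate of the scalar channel capacity $C_1(\mu,N)\triangleq\sup_{\,\Ex{}{a}\le\mu}\Inf{a}{g}$, where $g\sim(a+1)\tfrac12\chi^2(2N)$. Since the normalized gain $\tfrac1N\tfrac12\chi^2(2N)$ concentrates around $1$ with fluctuation $\Theta(1/\sqrt{N})$, the receiver can resolve the scale $a+1$ only to multiplicative precision $1+\Theta(1/\sqrt{N})$, and I expect to show $C_1(\mu,N)=\Theta\!\big(\min(\mu\sqrt{N},\log N)\big)$ up to lower-order factors: the first term is the power-limited regime (low per-subcarrier energy), the second the saturation set by the number $\Theta(\sqrt{N})$ of resolvable scale levels. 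Granting this, I would maximize $\sum_b C_1(\mu_b,N)$ over allocations $\sum_b\mu_b\le LP$; by concavity the budget is spread over all $B=\Theta(N^\epsilon)$ subcarriers, giving per-subcarrier energy $\mu=LP/B=\Theta(N^{\tau-\epsilon})$. The threshold $\mu=\Theta(N^{-1/2})$ corresponds exactly to $\epsilon=\tau+\tfrac12$: for $\epsilon\ge\tau+\tfrac12$ the power-limited term dominates and $\sum_b C_1=\Theta(L\sqrt{N})$, while for $\epsilon<\tau+\tfrac12$ each subcarrier saturates at $\Theta(\log N)$ and $\sum_b C_1=\Theta(N^\epsilon\log N)=\Theta(L\,N^{\epsilon-\tau}\log N)$. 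Dividing by $L/\Delta f$ gives $C_E(B,N,L)\le\Theta(N^{\min(\epsilon-\tau,1/2)})$, matching the achievability and fixing the exponent.

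I expect the main obstacle to be the upper bound $C_1(\mu,N)=O(\min(\mu\sqrt{N},\log N))$, made uniform over all admissible input densities. Elementary entropy inequalities are not tight enough here: the Gaussian maximum-entropy / log-domain estimate only yields $\Inf{a}{g}\le\tfrac12\log\!\big(1+N\,\mathrm{Var}(\log(1+a))\big)$, and summing this over $\Theta(N^\epsilon)$ subcarriers overshoots the target whenever $\epsilon>\tau+\tfrac12$, because it charges only the first moment of the energy while crediting the second moment of $\log(1+a)$. A tight converse instead requires bounding the capacity of the Gamma-scale channel directly—for instance via the golden-formula bound $\Inf{a}{g}\le\Ex{}{D\big(p(g|a)\,\big\|\,q\big)}$ with a heavy-tailed reference output law $q$ adapted to the multiplicative structure—so that the mutual information per subcarrier is genuinely capped at $\Theta(\mu\sqrt{N})$ in the power-limited regime and at the logarithm of the $\Theta(\sqrt{N})$ resolvable levels in the saturated regime. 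Controlling both moderately and extremely peaky (``flash'') inputs in this step, now with $L=\Theta(N^\tau)$ rather than a fixed $L$, is exactly the energy-channel bound underlying the Theorem \ref{th:cap} converse of Section \ref{sec:ub}, and the source of the technical differences from \cite{8826443,fgomezUnified} flagged in Appendix \ref{sec:sub}.
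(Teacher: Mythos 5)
Your proposal follows essentially the same route as the paper's own proof: achievability is the EM scheme of Lemma \ref{lem:em}, and the converse projects the received block onto the shape vector $\uu[b]$ (the paper's Markov chain $a\to\vv\to\Y$ with $\vv=\sqrt{a}\,\h+\Z\uu$ in Appendix \ref{sec:eub}), reducing each subcarrier to a block-length-one SIMO energy channel whose per-subchannel capacity is bounded and then summed over an optimized power/bandwidth allocation. The scalar Gamma-scale bound you flag as the main technical obstacle is precisely the step the paper disposes of by citation (\cite[Lemma 3]{8826443}, restated as Lemma \ref{lem:ubenergy}), so your outline matches the paper's proof in both structure and level of completeness, differing only cosmetically: your concave $\min\left(\mu\sqrt{N},\log N\right)$ per-subchannel bound lets Jensen's inequality justify equal allocation across all $B$ subcarriers, whereas the paper's quadratic-in-power bound forces its explicit two-regime optimization over the number of active subchannels $M$.
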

\begin{proof}
 The upper bound to the scaling of $C_E(B,N,L)$ is given in Appendix \ref{sec:eub}. The achievable rate is in Lemma \ref{lem:em}.
\end{proof}

\begin{remark}
  By the chain rule, it is clear that $C_E(B,N,L)\leq C_n(B,N,L)$. By comparing Theorems \ref{th:cap} and \ref{th:energycap}, they display equal scaling when $\tau=0$. Our PA achievable scheme has shown that the scaling factor $\sqrt{N}$ can be exceeded for $\tau>0$. However, Theorem \ref{th:energycap} is stronger than Lemma \ref{lem:gold2} in the sense that it proves that EM alone cannot exceed the square root limitation \textbf{even if we allow $\tau>0$}, and that no other OED schemes can either. The capacity scaling in Theorem \ref{th:cap} for $\tau>0$ cannot be achieved by simply adopting a different OED modulation other than EM in Lemma \ref{lem:em}. The rate scaling differences display a fundamental gap between OED Capacity and Non-Coherent Capacity scaling.
\end{remark}

A final question arises if non-coherent energy modulations can be improved at all. To this, we can answer favorably, but to achieve this we must not associate the concept of ``energy modulation'' with the variable $a[b]$ in Lemma \ref{lem:marzetta}. Indeed, in Section \ref{sec:fem} we propose a Fast Energy Modulation (FEM) scheme that transmits an energy constellation of $L$ independent i.i.d. symbols in the variable $\x[b]$. This allows to significantly improve the spectral efficiency, however a rate scaling limitation proportional to the square root of the number of antennas still arises.

\begin{lemma}
\label{th:fem}
The FEM scheme achieves a vanishing error probability as $N\to\infty$ with rate
\begin{equation}
\label{eq:RFEMscaling}
 R_{FEM}= \Theta(N^{\min(\epsilon,\frac{1}{2})})
\end{equation}
\end{lemma}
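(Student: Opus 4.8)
The plan is to analyze the Fast Energy Modulation (FEM) scheme by treating each of the $L$ time slots inside a coherence block as an \emph{independent} energy-modulated symbol carried in the individual component $x_\ell[b]$, and performing per-slot energy detection, in contrast to the EM scheme of Lemma \ref{lem:em} which repeats a single block-energy symbol $L$ times (and is therefore the kind of Only-Energy-Detection input of Definition \ref{def:oed}). First I would activate $M=\Theta(N^{\min(\epsilon,\frac{1}{2})})\le B$ subcarriers, split the per-block energy budget $LP/M$ uniformly over the $L$ slots so each symbol carries energy $a_\ell[b]=|x_\ell[b]|^2$ drawn from a real non-negative constellation with maximum level $a_{\max}=\Theta(P/M)$ (the phase is irrelevant to energy detection), and define the per-slot statistic $r_\ell[b]=\sum_{n=0}^{N-1}|y_{n,\ell}[b]|^2$.

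The core of the argument is a concentration analysis of $r_\ell[b]$. Conditioned on the channel and on $a_\ell[b]$, the statistic $r_\ell[b]$ is a sum of $N$ independent terms with conditional mean $N+a_\ell[b]\,\|\h[b]\|^2$ and standard deviation $\Theta(\sqrt N)$, while $\|\h[b]\|^2\sim\tfrac{1}{2}\chi^2(2N)$ concentrates at $N$ with relative fluctuation $\Theta(1/\sqrt N)$. Hence the estimator $\hat a_\ell[b]=(r_\ell[b]-N)/N$ recovers $a_\ell[b]$ with additive resolution $\Theta(1/\sqrt N)$, and the channel-induced multiplicative bias $a_\ell[b]\,(\|\h[b]\|^2/N-1)=O(a_{\max}/\sqrt N)$ does not exceed the order of this resolution in either regime, so the unknown common gain may be replaced by its mean $N$. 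The number of reliably distinguishable levels is therefore $\Theta(a_{\max}\sqrt N)=\Theta(P\sqrt N/M)$, i.e. $\Theta(1)$ to $O(\log N)$ bits per symbol. Because each block carries $L$ such symbols on each of the $M$ subcarriers over a block duration $L/\Delta f$, the factor $L$ cancels in the rate definition, which then scales as $M$ times the per-symbol bit count; this cancellation is exactly what removes the $1/L$ (equivalently $-\tau$) penalty of EM. Setting $M=\Theta(N^{\min(\epsilon,\frac{1}{2})})$ gives $R_{FEM}=\Theta(N^{\min(\epsilon,\frac{1}{2})})$, and the cap at $\sqrt N$ is structural: once $M\gg\sqrt N$ the per-symbol energy $P/M$ falls below the resolution $1/\sqrt N$ and not even one level can be resolved.

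The main obstacle is controlling the overall error probability, since the $L$ symbols of a block share one realization of $\h[b]$ and their detection errors are correlated through it. I would address this as in Lemma \ref{lem:em}: bound the per-symbol error by a sub-gamma (chi-squared) tail inequality for the deviation of $r_\ell[b]$ from its conditional mean, which can be driven below any inverse polynomial in $N$ by taking the constellation spacing a $\Theta(\sqrt{\log N})$ multiple of the $\Theta(1/\sqrt N)$ resolution, and then union-bound over the $\Theta(ML)=\mathrm{poly}(N)$ symbols; the common gain enters only through the negligible bias already controlled above. Alternatively, since the model permits coding over many independent coherence blocks, I would identify the achievable per-symbol rate with the ergodic mutual information of the induced scalar energy channel, whose value is a positive constant when $M=\Theta(\sqrt N)$, giving the same exponent while averaging the common gain out entirely. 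The delicate point, and the reason the result matters, is that the single quantity $\Theta(1/\sqrt N)$, the relative fluctuation of the $\chi^2(2N)$ energy statistic, is simultaneously small enough to make the common-gain bias harmless yet large enough to cap the per-symbol resolution, so that FEM strictly improves on EM by eliminating the $\tau$-dependence of Lemma \ref{lem:em} while still falling short of the full non-coherent capacity of Theorem \ref{th:cap} whenever $\tau>0$.
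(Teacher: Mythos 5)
Your proposal is correct and follows essentially the same route as the paper: both define FEM by dropping EM's $L$-fold repetition in favor of $L$ independent per-slot energy symbols on $M=\Theta(N^{\min(\epsilon,\frac{1}{2})})$ subcarriers, detect each symbol via the per-slot statistic $\sum_{n}|y_{n,\ell}[b]|^2$ (a scaled $\chi^2(2N)$ whose $\Theta(1/\sqrt{N})$ relative fluctuation sets the resolution and hence the $\sqrt{N}$ cap), and control the error probability by a chi-squared tail bound plus a union bound over the $\Theta(ML)$ symbols, exactly as in the paper's adaptation of the error analysis behind Lemma \ref{lem:em}. Your additional care about the common-gain bias from the shared $\h[b]$ is sound but not needed in the paper's version, since conditioning only on $a_\ell[b]$ already yields an exact scaled chi-squared law and the union bound requires no independence across slots.
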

\begin{proof}
 The FEM scheme is analyzed in Section \ref{sec:fem}.
\end{proof}

\begin{remark}
 Theorem \ref{th:energycap} is valid for any OED scheme according to Definition \ref{def:oed}, following from the Marzetta-Hochwald enery-shape separation of the input in Lemma \ref{lem:marzetta}. The rate upper bound converse in Theorem \ref{th:energycap} does not apply to any non-coherent encoding scheme that does not meet Definition \ref{def:oed}. As in Lemma \ref{th:fem}, examples of non-OED schemes can be proposed for other non-coherent modulation techniques. For example, Theorem \ref{th:energycap} would not apply to an FSK scheme that allows to change the frequency in each instant $\ell$.
\end{remark}

We give a visual summary of all of our results in Fig. \ref{fig:Rexall}. For $\tau<1$, the capacity scaling of a coherent channel can be matched in a non-coherent channel only if the bandwidth is below a certain threshold, using either $a)$ a FEM scheme when $\epsilon<\frac{1}{2}$ or $b)$ a PA scheme when $\epsilon<\frac{1+\tau}{2}$. The latter is also the capacity scaling of the non-coherent channel. For $\tau\geq1$ the capacity scaling of the non-coherent channel equals that of the coherent channel for all bandwidths. The OED schemes designed with a strict interpretation of the energy-shape separation by Marzetta and Hochwald, such as the canonical EM scheme, cannot achieve the capacity scaling of the non-coherent channel at all $\forall \tau>0$.

\begin{figure}
\centering
\includegraphics{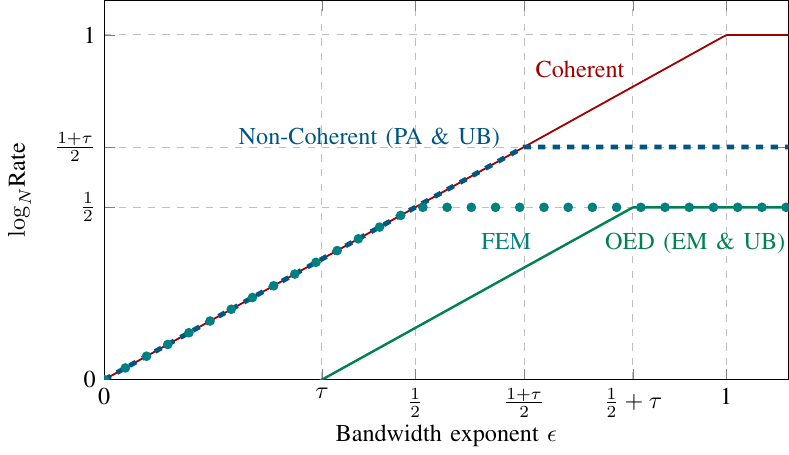}
\caption{Achievable rate exponents with $0<\tau<1$.}
\label{fig:Rexall}
\end{figure}

\section{Non-Coherent Capacity Upper Bound}
\label{sec:ub} 

\subsection{Subchannel Decomposition}

Each subchannel output depends only on the corresponding subchannel input, so we can separate the mutual information using the chain rule. Moreover the subchannels are i.i.d. and there is no CSI nor feedback, so time-diversity is exploited by the ergodic capacity average and frequency diversity is unnecessary. Therefore, we can fully decompose the rate as a sum of subchannel rate contributions:
\begin{equation}
\label{eq:decomp}
\begin{split}
 &C_n(B,N,L)=
 \\&
 \sup_{\substack{p(\{\x[b]\}_{b=0}^{B-1})\\s.t.\sum_{b=0}^{B-1}\Ex{}{|\x[b]|^2}\leq PL}}\frac{\Ex{\{\h[b]\}_{b=0}^{B-1}}{\Inf{\{\x[b]\}_{b=0}^{B-1}}{\{\Y[b]\}_{b=0}^{B-1}}}}{L/\Delta f}\\
 &\qquad=\frac{\Delta f}{L}\sup_{\substack{p(\{\x[b]\}_{b=0}^{B-1})s.t.\\\sum_{b=0}^{B-1}\Ex{}{|\x[b]|^2}\leq PL}} \sum_{b=0}^{B-1}\Ex{\h[b]}{\Inf{\x[b]}{\Y[b]}}\\
 &\qquad=\frac{\Delta f}{L}\sup_{\{\rho_b \}_{b=0}^{B-1}s.t.\sum_{b=0}^{B-1}\rho_b\leq PL} \sum_{b=0}^{B-1} f_{\mathrm{C}}(\rho_b)\\
\end{split} 
\end{equation}
where $\rho_b$ is the power allocated to subchannel $b$, and 
$$f_{\mathrm{C}}(\rho_b)\triangleq \sup_{p(\x[b])s.t.\\\Ex{}{|\x[b]|^2}\leq\rho_b}\Ex{\h[b]}{\Inf{\x[b]}{\Y[b]}}$$
is the subchannel capacity contribution in each of the identical subchannels as a function of $\rho_b$.

Since all subchannels are identical and $f_{\mathrm{C}}(\rho)$ is an increasing function of $\rho$, by the symmetry of the problem \eqref{eq:decomp} has a solution of the form
\begin{equation}
\label{eq:Moptim}
\begin{split}
 C_n(B,N,L)&=\frac{\Delta f}{L}\sup_{M\in\{1\dots B\}}Mf_C(PL/M).
\end{split} 
\end{equation}

In summary, to compute the capacity we must solve two problems: find the input distribution to achieve the maximum in $f_C(PL/M)$ for one subchannel, given a value of $M$; and then find the optimum number of subchannels that should be actively used, with equal power allocations to each. We reiterate that in this paper we assume a non-coherent channel without feedback in which the optimality of this equal power allocation is reasonable. However, in the cases with feedback or CSIT that we leave for future work, equal power allocation would likely be suboptimal.

%

\subsection{Energy-Shape Decomposition}

Since subchannels are i.i.d., hereafter we omit the index $[b]$ to simplify notation. Using Lemma \ref{lem:marzetta}, we need only to find the best energy p.d.f. $p(a)$, whereas the shape distribution is known to be uniform in the hyper sphere $p(\uu)=\frac{(\pi)^{L}}{(L-1)!}\delta(\|\uu\|^2-1)$. Using the chain rule we can write
\begin{equation}
\label{eq:esdivision}
\begin{split}
  f_C(\rho)
          &=\sup_{\substack{p(a)s.t.\\\Ex{}{|a|^2}\leq \rho}}\Ex{\h}{\Inf{a}{\Y}}+\Ex{\h,a}{\CInf{\uu}{\Y}{a}}\Big|_{\uu\sim \textrm{IDUV}}.
\end{split} 
\end{equation}

We recall that we must further maximize over the number of active subchannels $M$ in \eqref{eq:Moptim}. By adding the chain rule and Lemma \ref{lem:marzetta}, this results in
\begin{equation}
\label{eq:Mfunctions}
\begin{split}
 &C_n(B,N,L)=
 \\ &
 \sup_{\substack{1\leq M\leq B\\p(a)s.t.\\a>0,\Ex{}{a}\leq \frac{PL}{M}}}\frac{M\Delta f}{L}\left(\Ex{\h}{\Inf{a}{\Y}}+\Ex{\h,a}{\CInf{\uu}{\Y}{a}}\Big|_{\uu\sim \textrm{IDUV}}\right)\\
\end{split} 
\end{equation}
Moreover, we may upper bound the capacity \eqref{eq:Moptim} by performing separate optimizations in the first and second terms. Thus, we define the ``Only Energy Detection Capacity'' and the ``Shape Encoding Capacity'' such that their sum forms an upper bound to the non-coherent capacity as follows:

\begin{equation}
\label{eq:twotermCUB}
\begin{split}
 C_n(B,N,L)&\leq C_E(B,N,L)+C_S(B,N,L)\\
 \textnormal{with:}\\
 C_E(B,N,L)&=\sup_{\substack{1\leq M\leq B\\p(a)s.t.\\a>0,\Ex{}{a}\leq \frac{PL}{M}}}\frac{M\Delta f}{L}\Ex{\h}{\Inf{a}{\Y}}\\
 C_S(B,N,L)&=\sup_{\substack{1\leq M\leq B\\p(a)s.t.\\a>0,\Ex{}{a}\leq \frac{PL}{M}}}\frac{M\Delta f}{L}\Ex{\h,a}{\CInf{\uu}{\Y}{a}}\Big|_{\uu\sim \textrm{IDUV}}\\
\end{split} 
\end{equation}

\subsection{Scaling of the Sum of Upper Bounds}

We have upper bounded $C_E(B,N,L)\leq\Theta(N^{\min(\epsilon-\tau,\frac{1}{2})})$ in Theorem \ref{th:energycap}, which is proven in Appendix \ref{sec:eub}. For the second term we produce the following upper bound:
\begin{lemma}
 In a Non-Coherent Block Fading Rayleigh i.i.d. wideband SIMO channel \eqref{eq:channel} with $N$ receive antennas, bandwidth $B=\Theta(N^\epsilon)$ and $L=\Theta(N^\tau)$, the Shape-Encoding Capacity defined in \eqref{eq:twotermCUB}, scales at most as
\begin{equation}
\label{eq:CSscaling}
 C_S(B,N,L)\leq \Theta(N^{\min(\epsilon,\frac{1+\tau}{2},1)})
\end{equation}
\end{lemma}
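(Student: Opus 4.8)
The plan is to bound the single-subchannel conditional shape information $\Ex{\h,a}{\CInf{\uu}{\Y}{a}}$ and then carry out the power/subchannel optimization $\sup_{1\le M\le B}\frac{M\Delta f}{L}(\cdot)$ exactly as in the definition of $C_S$ in \eqref{eq:twotermCUB}. I would produce two complementary upper bounds on the per-block shape information $\CInf{\uu}{\Y}{a}$ and take their minimum, each bound being responsible for a different term in $\min(\epsilon,\frac{1+\tau}{2},1)$: a coherent ``genie'' bound that yields $\min(\epsilon,1)$, and a strictly non-coherent bound that yields $\frac{1+\tau}{2}$.

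First I would reveal $\h$ to the receiver. Since $\uu$ and $\h$ are independent given $a$, this genie cannot decrease the information, and after projecting $\Y$ onto $\h$ in the antenna space the shape faces an $L$-dimensional AWGN channel of total signal energy $a\|\h\|^2$, so that $\CInf{\uu}{\Y}{a}\le L\log(1+a\|\h\|^2/L)$, with $\|\h\|^2\to N$ by concentration. Feeding this into the optimization, together with the elementary inequality $L\log(1+a/L)\le a$, bounds the normalized rate by the coherent capacity scaling $\Theta(N^{\min(\epsilon,1)})$: the number of active subchannels is at most $B=\Theta(N^\epsilon)$ (a bandwidth/degrees-of-freedom limit, where each block carries at most $L$ complex dimensions so the normalized contribution per subchannel is $\Theta(1)$), while the total received power caps the rate at $\Theta(N)$ (the array limit, since $\phi(a)\le a$ collapses the sum to $\Theta(NP)$). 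I would take care with the logarithmic factors and note that, for the concave coherent bound, Jensen makes the deterministic allocation $a=PL/M$ optimal so peaky energy distributions do not help here.

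The coherent bound is too loose precisely when $\frac{1+\tau}{2}<\min(\epsilon,1)$ (large bandwidth with $\tau<1$), so the heart of the proof is a second, genuinely non-coherent bound. Conditioned on $\uu,a$ the output has i.i.d. rows $\mathcal{CN}(\zero,\I+a\uu\uu^H)$, hence $\dCEnt{\Y}{\uu,a}=NL\log(\pi e)+N\log(1+a)$ exactly, and the shape information is $\dEnt{\Y\,|\,a}-\dCEnt{\Y}{\uu,a}$. The naive Gaussian max-entropy estimate $\dEnt{\Y\,|\,a}\le NL\log(\pi e(1+a/L))$ only reproduces $N\phi(a)$ with $\phi(a)=L\log(1+a/L)-\log(1+a)\le a$, which again returns merely $\Theta(N)$. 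I would instead estimate $\dEnt{\Y\,|\,a}$ sharply by exploiting the bi-unitary invariance of the marginal $p(\Y\,|\,a)$: it is a rank-one (spiked) perturbation of a white Gaussian and depends on $\Y$ only through its singular values. Extracting the second-order (in per-dimension SNR) correction to the output entropy that is absent from the Gaussian bound — equivalently the M\'edard--Gallager-type fourth-moment overspreading penalty underlying Lemma \ref{lem:med1} — should show that the shape information saturates once the signal spike fails to separate from the noise bulk, i.e. once $a\|\h\|^2\lesssim(\sqrt{N}+\sqrt{L})^2$. Propagating this detectability threshold through the budget $\Ex{}{a}\le PL/M$ (where concentrating power on few subchannels, i.e. peaky $a$, is exactly what the optimization over $M$ captures) then caps the normalized shape rate at $\Theta(\sqrt{NL})=\Theta(N^{\frac{1+\tau}{2}})$.

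Taking the minimum of the coherent bound $\Theta(N^{\min(\epsilon,1)})$ and the non-coherent spiked bound $\Theta(N^{\frac{1+\tau}{2}})$, both of which are valid upper bounds everywhere, yields the claimed $\Theta(N^{\min(\epsilon,\frac{1+\tau}{2},1)})$. The main obstacle is the third paragraph: obtaining the sharp, non-Gaussian estimate of $\dEnt{\Y\,|\,a}$. The two standard tools — the coherent genie and the Gaussian max-entropy (duality with a white reference output) bound — are provably too weak, each returning $\min(\epsilon,1)$ and never $\frac{1+\tau}{2}$; one must therefore control the spiked-Wishart singular-value law of $\Y$ and verify that the resulting penalty is tight enough to produce the $\sqrt{NL}$ threshold rather than the coherent $N$. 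I expect this spiked random-matrix / overspreading computation to be the technical core referred to in Appendix \ref{sec:sub}, with the concentration of $\|\h\|^2$ and the bookkeeping of logarithmic factors as secondary nuisances.
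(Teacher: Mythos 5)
Your proposal has the right architecture, and it matches the paper's at the top level: two complementary per-block bounds whose minimum drives the exponent (a coherent/genie bound giving $\min(\epsilon,1)$ and a genuinely non-coherent bound giving $\tfrac{1+\tau}{2}$), followed by an optimization over $p(a)$ and $M$ that must survive peaky inputs. Your coherent genie step is correct ($\CInf{\uu}{\Y}{a}\leq L\log(1+a\|\h\|^2/L)$ via revealing $\h$, which is valid since $\uu$ and $\h$ are independent). However, the core of the proof --- the bound that actually produces the exponent $\tfrac{1+\tau}{2}$ --- is not established in your proposal: you reduce it to a hoped-for sharp estimate of $\dEnt{\Y}{}$ given $a$ via spiked-Wishart singular-value analysis (``should show'', ``I expect''), and you yourself flag this as the main obstacle. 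That is precisely the step that constitutes the lemma; without it you have a program, not a proof. This is a genuine gap.

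Moreover, the detectability threshold your sketch proposes is quantitatively off in the critical regime, and the paper's actual route is far more elementary than the random-matrix program you outline. Your condition $a\|\h\|^2\lesssim(\sqrt{N}+\sqrt{L})^2$ with $\|\h\|^2\approx N$ gives a crossover at $a=\Theta(1)$ whenever $L\leq N$, whereas the correct crossover (both from the spiked-covariance detection threshold with $N$ samples in dimension $L$, and from the paper's own computation) is $a_o=\Theta(\sqrt{L/N})$ for $\tau\leq 1$; plugging $a_o=\Theta(1)$ into the peaky optimization yields a cap of $\Theta(N)$ rather than $\Theta(\sqrt{NL})$, so the sketch as written would not recover the $\tfrac{1+\tau}{2}$ term at all. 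The paper needs no random-matrix machinery: it computes $p(\Y|\uu,a)$ in closed form (Appendix \ref{app:fyu}), lower bounds the mixture $p(\Y|a)=\Ex{\uu}{p(\Y|\uu,a)}$ by applying Jensen's inequality inside the exponential (Appendix \ref{app:fy}), using only the identity $\Ex{\uu}{\|\Sm\uu\|^2}=\|\Y\|^2/L$, and thereby obtains the explicit bound $\CInf{\uu}{\Y}{a}\leq\Phi_1(a)=\frac{a^2}{1+a}N\left(1-\frac{1}{L}\right)$. The quadratic small-$a$ behavior of $\Phi_1$ is exactly the fourth-moment/overspreading penalty you were reaching for; intersecting it with the coherent bound $\Phi_2(a)=L\log(1+aN)$ pins the crossover at $a_o=\Theta(\sqrt{L/N})$ for $\tau\leq1$ (and $\Theta(L/N)$ for $\tau>1$), and the convex/concave splitting of $\min(\Phi_1,\Phi_2)$, with point-mass (peaky) optimizers on each piece, then yields the $\Theta(\sqrt{NL})$ cap (Appendix \ref{sec:sub}). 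Replacing your third paragraph with this Jensen argument is the missing step that would complete your proof.
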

\begin{proof}
 The proof is given in Appendices \ref{sec:sub}-\ref{app:fy}.
\end{proof}

Since for all $\epsilon>0$ we have that 
$$\min(\epsilon-\tau,\frac{1}{2})<\min(\epsilon,\frac{1+\tau}{2},1),$$
we observe that $C_E(B,N,L)\ll C_S(B,N,L)$ for sufficiently large $N$, and we can conclude that the non-coherent channel capacity scaling is upper bounded by 
$$C_n(B,N,L)\leq 2C_S(B,N,L) = \Theta(N^{\min(\epsilon,\frac{1+\tau}{2},1)}).$$
This concludes the proof of the converse of Theorem \ref{th:cap}.

\section{Pilot-Assisted Coherent Scheme}
\label{sec:pa}

We propose a simple pilot-assisted communication scheme on the channel \eqref{eq:NBTchan}. The design of the optimal number of pilot symbols and their allocated power in narrowband Rayleigh block fading MIMO channels was studied in \cite{Hassibi2003}. We assume $M$ subchannels are used simultaneously, with equal power allocation $\rho=\frac{PL}{M}$ each. Since we consider the SIMO case, one symbol in each subchannel $b$ must be used as a pilot \cite{Hassibi2003}. We assume the first symbol in each coherence block of each subchannel is the pilot $x_p[b]=x_0[b]$, and the remaining $L-1$ symbols are used to convey data $\{x_\ell[b]\}_{\ell=1}^{L-1}$. Moreover, we assume the power budget per subchannel is divided between pilots and data as $\Ex{}{|x_p|^2}=\alpha \rho$ and $\Ex{}{\frac{1}{L-1}\sum_{\ell=1}^{L-1}|x_\ell[b]|^2}=\frac{(1-\alpha)\rho}{L-1}$, for some fraction $\alpha\in[0,1]$.

We assume an independent estimation of each i.i.d. $h_{n}[b]$ using the first channel output in each antenna as
$$y_{p,n}[b]=h_n[b]x_p[b]+z_n[b]$$
where $z_n[b]\sim\mathcal{CN}(0,1)$. To maximize the achievable rate, the Minimum Mean Squared Error (MMSE) estimation $\hat{h}_{n}[b]=\frac{x^*_p}{|x_p|^2 +1}y_{p,n}[b]$ is employed \cite{Hassibi2003}. Using a constant amplitude pilot $|x_p[b]|=\sqrt{\alpha \rho}$ minimizes the variance of the MMSE estimator error, defined as $\tilde{h}_n[b]=h_{n}[b]-\hat{h}_{n}[b]\sim \mathcal{CN}(0,\varepsilon)$, where $\varepsilon=\frac{1}{1+\alpha\rho}$ \cite{Hassibi2003}.

In the data part, for $1\leq\ell<L-1$, we write
$$y_{n,\ell}[b]=\hat{h}_{n}[b]x_{\ell}[b]+\tilde{h}_{n}[b]x_{\ell}[b]+z_{n,\ell}[b],$$
where we assume i.i.d. symbols, so $\Ex{}{|x_{\ell}[b]|^2}=\frac{(1-\alpha)}{L-1}\rho.$

In addition, we assume the receiver employs Maximum Ratio Combining (MRC) treating $\hat{h}_{n}[b]$ as the true channel, producing
\begin{equation}
\begin{split}
r_{\ell}[b]&=\hat{\h}^H[b]\y_{\ell}[b]\\
&=\|\hat{\h}[b]\|^2x_{\ell}[b]+\left(\sum_{n=0}^{N}\hat{h}_{n}^*[b]\tilde{h}_{n}[b]\right)x_{\ell}[b]+z_{n,\ell}[b].
\end{split}
\end{equation}

Using a nearest neighbor decision is not optimal, but is sufficient to achieve at least the same rate as in an AWGN channel with an equivalent noise variance equal to the power of the second plus third terms of $r_{\ell}[b]$ \cite{Hassibi2003}. Since $(\h_{n}[b]|\hat{\h}_{n}[b])$ is distributed as $\mathcal{CN}(\hat{\h}_{n}[b],\varepsilon\I)$, the effective noise variance is $\varepsilon \frac{(1-\alpha)}{L-1}\rho+1$, and the effective signal energy is $\frac{(1-\alpha)}{L-1}\rho\|\h[b]\|^2(1-\varepsilon)$. Thus the achievable rate as a function of the perceived SNR at the symbol decider is
\begin{equation}
\label{eq:defRp}
R_{PA}(\alpha) \triangleq \frac{L-1}{L}M\log\left(1+\frac{\frac{(1-\alpha)}{L-1}\rho\|\h[b]\|^2(1-\varepsilon)}{1+\varepsilon \frac{(1-\alpha)}{L-1}\rho}\right) 
\end{equation}

In order to study the scaling as $N\to\infty$ we introduce the approximation $\|\h[b]\|^2\approx N$ for large $N$. We also define $\kappa\triangleq\frac{\rho}{L-1}$, and replace $\varepsilon=1/(1+\alpha \rho)$. With some rearrangement, this produces
%

\begin{equation}
\begin{split}
R_{PA}(\alpha) &\simeq\frac{L-1}{L}M\log(1+\frac{ (1-\alpha)\kappa N\rho\alpha}{1 + \alpha \rho + (1-\alpha)\kappa})\\
          &=\frac{L-1}{L}M\log(1+\underset{S_o}{\underbrace{\frac{N\kappa\rho}{1+\kappa}}}\underset{\gamma(\alpha)}{\underbrace{\frac{\alpha-\alpha^2}{\alpha\frac{\rho-\kappa}{1+\kappa}+1}}})
\end{split}
\end{equation}
which we can maximize by selecting $\alpha$ to maximize the decider-perceived SNR inside the logarithm, $S_o\gamma(\alpha)$. We perform a standard maximization of the function of the form $\gamma(\alpha)=\frac{\alpha-\alpha^2}{\alpha C_1+1}$, where the constant is $C_1=\frac{\rho-\kappa}{1+\kappa}$. This produces the following solution
$$\alpha^*=\frac{\sqrt{1+C_1}-1}{C_1}.$$

For convenience, we can rewrite the same result in the following forms as well
$$\alpha^*=\frac{1}{\sqrt{1+C_1}+1}=\frac{1}{\sqrt{\frac{1+\rho}{1+\kappa}}+1}.$$
In the third form, we can observe that, since $\rho>0$ and $\rho>\kappa>0$, one unconstrained global solution always satisfies $0\leq \alpha^* \leq 1$. Thus there is no need to introduce a Lagrangian multiplier to enforce such a constraint, and the global solution may be adopted directly.

Next we define a second constant $C_2=\sqrt{\frac{1+\rho}{1+\kappa}}=\sqrt{C_1+1}$. This allows to write, $\alpha^*=\frac{1}{C_2+1}$, $1-\alpha^*=\frac{C_2}{C_2+1}$, and $\alpha^*C_1+1=C_2$. Substituting we get
 \begin{equation}
 \begin{split}
    S_o\gamma(\alpha^*)&=S_o\frac{(1-\alpha^*)\alpha^*}{\alpha^* C_1+1}\\
                       &=S_o\frac{1}{(C_2+1)^2}\\
                       &=\frac{N\kappa\rho}{1+\kappa}\frac{1}{(\sqrt{\frac{1+\rho}{1+\kappa}}+1)^2}\\
                       &=\frac{N\kappa\rho}{(\sqrt{1+\rho}+\sqrt{1+\kappa})^2}\\
 \end{split}
 \end{equation}
Let $M=\Theta(N^{\mu})$ with $0\leq \mu\leq \epsilon$. We study the scaling of the decider-perceived SNR, $S_o\gamma(\alpha^*)$, as $N\to\infty$ for different values of $\tau$ and $\mu$. We recall that $\rho=\Theta(N^{\tau-\mu})$ and $\kappa=\Theta(N^{-\mu})$.
Thus, we must distinguish two different cases
\begin{enumerate}
 \item For $\mu\leq\tau$, we have $\rho\gg1\gg\kappa$. Therefore, $S_o\gamma(\alpha^*)\simeq N\kappa= \Theta(N^{1-\mu})$. For all $\epsilon$ we can choose $\mu\leq\min(\epsilon,\tau)$ and, if $\mu\leq 1$, \eqref{eq:defRp} scales as $R_{PA}=\Theta(N^\mu)$. Conversely, if $\tau>\mu>1$, $R_{PA}=\Theta(N)$.
 \item For $\mu>\tau$ we have $1\gg\rho\gg\kappa$. This means that, $S_o\gamma(\alpha^*)\simeq N\kappa\rho= \Theta(N^{1+\tau-2\mu})$. For all $\epsilon$ we can choose $\mu\leq\min(\epsilon,\frac{1+\tau}{2})$ and \eqref{eq:defRp} scales as $R_{PA}=\Theta(N^\mu)$. On the other hand, if we choose $\mu>\frac{1+\tau}{2}$, $R_{PA}$ scales as $\Theta(N^{1+\tau-\mu})\leq\Theta(N^{\frac{1+\tau}{2}})$. Clearly, making $\mu$ increase above $\frac{1+\tau}{2}$ is not beneficial, and the bandwidth overspreading constraint is $M\leq\Theta(\sqrt{NL})=\Theta(N^{\frac{1+\tau}{2}})$.
\end{enumerate}

When $\tau\leq 1$, $\frac{1+\tau}{2}$ is greater than $\tau$, with equality at $\tau=1$. Therefore, we can summarize all the PA achievable rate scaling with the following expression

$$R_{PA}=\Theta(N^{\min(\epsilon,\frac{1+\tau}{2},1)}),$$
concluding the proof of the achievability of Theorem \ref{th:cap}.

\section{Fast Energy Modulation (FEM)}
\label{sec:fem}
 We introduce an improved FEM scheme in order to compensate for the poor spectral efficiency of EM, which stems from its use of a $L\times$ repetition code. Although we propose FEM as a minor modification of EM, these two modulations echo the abundant literature on energy detection non-coherent schemes (see \cite{9445644} and references therein). To improve the spectral efficiency, we simply remove the repetition coding component as follows: In each block of $L$ symbols, the transmitter selects $L$ \textit{independent} energy symbols 
 $$a_0[b]\dots a_{L-1}[b]\in\mathcal{C}_M^L,$$
 and the transmitted signal is created as 
 $$x_{\ell}[b]=\sqrt{a_\ell[b]}.$$
 
 As in EM, we introduce an energy statistic for decoding, which now must take the form
 $$v_{\ell}[b]=\sum_{n=0}^{N-1}\left|y_{n,\ell}[b]\right|^2\sim (a_{\ell}[b]^2+1)\chi^2(2N).$$
 Here, we note that there is no repetition anymore, an independent statistic $v_{\ell}[b]$ is calculated and an independent nearest neighbor $\hat{a}_\ell[b]$ is decided for each value of $\ell$ and $[b]$. Since the channel remains constant for the duration of the block of $L$ symbols, yet the decoder is making symbol by symbol decisions, this decoder is clearly suboptimal. Nevertheless, this scheme is sufficient to guarantee that the error probability vanishes as $N\to\infty$ for the desired rates.
 
 To show this we need only to introduce a minor modification to the error analysis of \cite[theorem 2]{8826443} by considering the union bound over $L$ independently decoded symbols. Since a $\times L$ repetition code is no longer present as in EM, the inter-symbol distance requirement for FEM is that $d=\Theta(N^t)$ must now satisfy $t<\frac{1}{2}$. The constellation size is thus $\log_2(\frac{N^{t}}{M})\leq\log_2(N^{\frac{1}{2}-\epsilon})$. As a result, the total rate of FEM is $\Theta(M\log_2(N))$ with $M\leq\Theta(N^{\min(\epsilon,\frac{1}{2})})$. 
 
 For $\tau=0$, EM and FEM schemes achieve the same scaling; hence the use of EM in \cite{Chowdhury2014a} was sufficient to achieve the upper bound and prove the capacity scaling result. On the other hand, for $\tau>0$,
 EM cannot properly achieve the coherent capacity scaling, whereas FEM can for $\epsilon\leq\frac{1}{2}$.

 We remark that the FEM scheme we propose does \textbf{not} coincide with the Marzetta-Hochwald energy and shape decomposition of the input in Lemma \ref{lem:marzetta}. To prove the achievability of Theorem \ref{th:energycap}, in which we ultimately concluded that both terms in the Marzetta-Hochwald input decomposition are relevant, we have considered the EM scheme. FEM always achieves better rate exponent than EM, and it can achieve the capacity scaling for $\epsilon\leq \frac{1}{2}$. Nonetheless FEM does, in fact, encode information in the vector $\uu[b]$ of the decomposition $\x[b]=a[b]\uu[b]$, and is not a canonical OED scheme in the sense of Lemma \ref{lem:marzetta}.

\section{Simulation Results}
\label{sec:sim}

We demonstrate our results by simulating the progression of the Bit Error Rate (BER) and rate in the channel \eqref{eq:channel} as $N$ increases for different practical schemes. We simulate three achievable schemes: EM, FEM and PA, for different values of $\epsilon$ and $\tau$. To implement the simulation, the number of subchannels and block length must be integers, so we adopt $B=\lceil N^\epsilon\rceil$ and $L=\lceil N^\tau\rceil$. For EM and FEM, we adopt a binary constellation with symbols $\mathcal{C}^{EM}=\{0,\frac{2}{M}\}$. Recall that EM repeats the same symbol $L$ times in each channel coherence block, whereas FEM transmits sequences of $L$ i.i.d. symbols. For PA, we adopt a scaled BPSK constellation $\mathcal{C}^{PA}=\{-1,+1\}$ where the first symbol is a pilot scaled by a factor $PL\alpha^*$ and the remaining data symbols are scaled by a factor $\frac{PL}{L-1}(1-\alpha^*)$ according to the results in Section \ref{sec:pa}. We set $P=2$ so that the initial SNR per receive antenna is $3$ dB when $B=1$.

We depict in Fig. \ref{fig:ber} the BER for all schemes. We recall that the EM requires that $\epsilon<\frac{1}{2}+\tau$, PA requires $\epsilon<\frac{1+\tau}{2}$ and FEM requires the strictest condition $\epsilon<\frac{1}{2}$. For the case $\epsilon=0.3$ and $\tau=0$ (blue), all three conditions are met and all schemes are able to achieve a decreasing error probability as $N$ increases. It must be noted that $10^4$ symbols were simulated and the ``saw tooth'' shape of the curve is due to the use of the ceiling function to compute $M$ and not to a lack of sufficient bits for Monte Carlo approximation. For the case $\epsilon=0.6$ and $\tau=0$ (red), all conditions are unmet and all schemes experience a large error probability that does not decay as $N$ increases. Finally, when $\epsilon=0.6$ and $\tau=0.3$ (green), FEM experiences a large error probability, as the condition $\epsilon<\frac{1}{2}$ is not satisfied. On the other hand, the conditions $\epsilon<\frac{1}{2}+\tau$ and $\epsilon<\frac{1+\tau}{2}$ are both satisfied, and both EM and PA schemes display a decaying error probability as $N$ increases.

Next, we look at the \textbf{nominal} bit rates of each scheme in Fig. \ref{fig:Rb}, defined simply as the number of transmitted bits per second. For EM, the nominal rate is $R_b^{EM}=B\log\|\mathcal{C}^{EM}\|/L$. For FEM, the nominal rate is $R_b^{FEM}=B\log\|\mathcal{C}^{EM}\|$. And for PA, the nominal rate is $R_b^{EM}=B\log\|\mathcal{C}^{PA}\|(L-1)/L$.  For the case $\epsilon=0.3$ and $\tau=0$ (blue), we can see that all schemes achieve rates that scale as $N^{0.3}$. Moreover, as we saw in the BER curves, these rates correspond to a ``reliable communication'' in the sense that the BER vanishes even as the rate grows. For the case $\epsilon=0.6$ and $\tau=0$ (red), Fig. \ref{fig:Rb} indicates that all schemes transmit with rates that scale as $N^{0.6}$. However, as we saw in the BER curves, these are capacity-exceeding transmission rates in the sense that the communication at these rates is not reliable. Finally, when $\epsilon=0.6$ and $\tau=0.3$ (green), FEM transmits with a rate that scales as $N^{0.6}$ that is not reliable, EM communicates reliably but its rate scales only as $N^{0.3}$, and only PA can simultaneously guarantee that  the communication is reliable and that the rate scales as $N^{0.6}$.

The comparison of nominal rates and error probability in Figs. \ref{fig:ber} and \ref{fig:Rb} can be unified informally by using an ad-hoc metric we call the \textbf{Binary Symmetric Channel (BSC) equivalent reliable rate}, defined as $R_{BSCeq}=R_b(1-H(BER))$. This is, we compute the capacity of a BSC in which the error probability is equal to the empirical BER, and the bit rate is equal to the nominal bit rate of each modulation. Although this definition is not very rigorous, this ad-hoc metric allows to visualize the fraction of the nominal rate that is reliable in the simulation. We show the BSC equivalent reliable rates in Fig. \ref{fig:Gput}. Again, for  $\epsilon=0.3$ and $\tau=0$ (blue), all schemes achieve reliable rates that scale as $N^{0.3}$. We can also see that in the case with $\epsilon=0.6$ and $\tau=0$ all schemes achieve reliable rates that scale as $N^{0.5}$. Finally, for the case  $\epsilon=0.6$ and $\tau=0.3$, EM achieves only a rate scaling of $N^{0.3}$, FEM achieves a reliable rate scaling of $N^{.5}$, and only the PA scheme can achieve a reliable rate scaling of $N^{.6}=N^{\epsilon}$. Despite the non-rigorous definition of $R_{BSCeq}$, the concordance of all these results with our capacity analyses is complete and quite remarkable.

\begin{figure}
 \centering
 \includegraphics[width=.8\columnwidth]{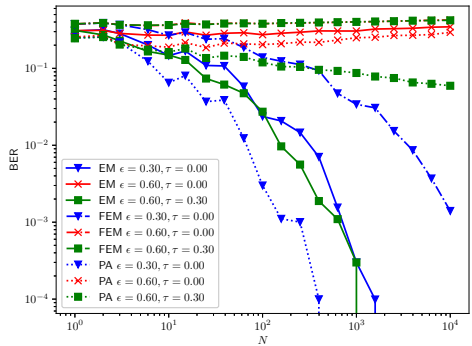}
 \caption{BER vs $N$ for different schemes.}
 \label{fig:ber}
\end{figure}

\begin{figure}
 \centering
 \includegraphics[width=.8\columnwidth]{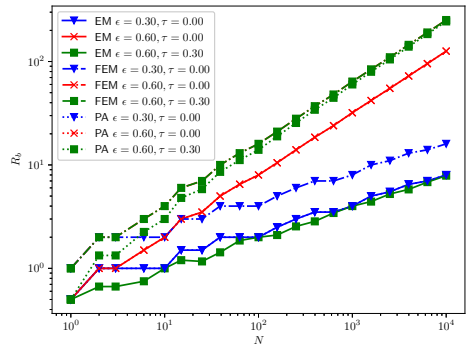}
 \caption{Bit rate vs $N$ for different schemes.}
 \label{fig:Rb}
\end{figure}

\begin{figure}
 \centering
 \includegraphics[width=.8\columnwidth]{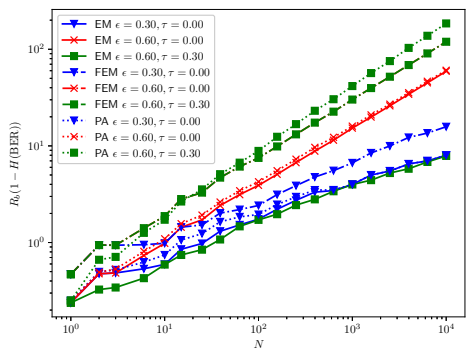}
 \caption{Equivalent BSC capacity vs $N$ for different schemes.}
 \label{fig:Gput}
\end{figure}

\section{conclusions}
\label{sec:conclusions}

In this paper we have studied the scaling of the capacity gap between coherent and non-coherent wideband SIMO Rayleigh block-fading channels, as a function of the bandwidth, number of receive antennas and channel coherence block length. The main insight of our paper is that the channel coherence block length plays a critical role in the capacity gap between coherent and non-coherent channels. Moreover, the maximum bandwidth allocation that can be used to avoid ``overspreading'' is also a key factor. In some prior formulations of the study of non-coherent channel capacity, Energy Detection techniques such as Energy Modulation have arisen as a promising encoding technique due to its capacity-like scaling with the number of antennas when the coherence block length is fixed. On the other hand, many practical technologies employ Pilot Assisted channel estimation and coherent receivers. Our result shows that Pilot Assisted schemes are always capacity-scaling-achieving in non-coherent channels, whereas Energy Detection schemes only achieve the capacity scaling in a more restricted sense that does not take into account the channel coherence block length.

Two important engineering messages emerge from our analysis. One is that as new technology standards continue to incorporate more spectrum, attention must be paid to the excess channel uncertainty caused by bandwidth overspreading, in which the capacity enters the wideband regime with power limited rates and the addition of even more degrees of freedom is not beneficial. The threshold to enter this regime scales as $\Theta\left(\min(\sqrt{NL},N)\right)$, which means that the limitation is more severe in high mobility scenarios. In addition, increasing the number of receive antennas can actually help push up the critical bandwidth threshold and permit to keep increasing wireless network rates by purchasing more spectrum.

The second key engineering lesson of our analysis is that the optimal choice between channel estimation or non-coherent encoding schemes seems to favor the first option in a significant number of cases. Only in very high mobility scenarios, which result in very short channel coherence length, the scaling analysis suggests that general Energy Detection schemes, and Energy Modulation in particular, are competitive in a rate-scaling sense. And even in those cases, both methods achieve equal scaling and a finer comparison of the rates beyond the scaling analysis should be performed before making the decision.

\appendices

\section{Discussion of the Channel Model}
\label{ap:chanmod}
The results in this paper are proven for an i.i.d. Rayleigh block-fading model. This classical fading model is extremely valued in wireless communications research for its excellent mathematical tractability while the results maintain some degree of proximity to the physical truth. Nevertheless, when considering a very large bandwdith or antenna array, engineers should be specially wary about the difference between this popular channel model and the physical reality of multi-path propagation. The usual justification of i.i.d. Rayleigh fading assumptions is rich scattering, that is, assuming a number of multipath reflections that is much higher than the number of taps, frequency bins, array elements, or both. This would motivate using the Central Limit Theorem (CLT) to characterize the coefficients, as a sum of many unknowns, as i.i.d. Gaussian. However, when the bandwidth is very large, the delays of multi-path reflections in the channel can be resolved individually. Likewise, the Angle of Arrival can be resolved in large antenna arrays with elements separated half a wavelength. These phenomena induce frequency correlation (delay sparsity) and spatial correlation (angular sparsity), respectively. In mmWave channels, advanced Saleh-Valenzuela sparse multipath models are more accurate \cite{Saleh1987}. In fact, many results show that in mmWave the CLT does not even justify the modeling of each reflection coefficient as Gaussian \cite{8844996,Mathew2016,TR38901,Coulson1998,bristolmeasurement}. The full-fledged physically-realist mmWave channel models in \cite{Mathew2016,TR38901} are, in fact, too cumbersome for analytical tractability. Theoretical analyses have resorted to different degrees of simplification ranging from sparse-Gaussian multipath \cite{journals/tit/TelatarT00,Mo2017} to almost-rich-scattering modified Rician or Rayleigh models that capture specific aspects of mmWave \cite{Du2017} such as oxygen absorption \cite{7896562}. More recently, there has been interest in ``holographic MIMO'' models that assume a very large number of antennas packed into a spatially constrained volume and spaced closer than $\lambda/2$ \cite{8437634,1386525}, in contrast to earlier mmWave array models with elements spaced $\lambda/2$ \cite{Mo2017,Mathew2016}, in which the antenna aperture goes to infinity with $N$.

The compromise between tractability and validity of the channel model is a common occurrence in capacity analises also outside mmWave. Lozano and Porrat indicate that their seminal critical bandwidth calculation holds for the ``wideband'' channel model, but may lose applicability as the bandwidth increases and the channel behaves as in the ``ultra-wideband'' model \cite{journals/twc/LozanoP12}. Ozgur, Leveque and Tse pointed out a similar weakness in network capacity scaling analysis \cite{Ozgur2007}. However, these authors have argued that the assumption of i.i.d. coefficients is justifiable when there is a range of the scaling analysis parameters in which simultaneously the number of elements is large and the channel assumptions hold. In our results, the argument is similar: our analysis is relevant for the range of values of the parameters $B$, $N$, and $L$ that are large-but-finite, and that are simultaneously sufficiently large for the big-O scaling results to dominate capacity, but not large enough that the channel model needs to be replaced. Abundant literature suggests that such a range of parameters exists. The seminal work by Medard and Gallager \cite{journals/tit/MedardG02} characterized the problem of overspreading without our assumptions on the channel model. Telatar and Tse showed that even a sparse multipath channel with a single path experiences overspreading when the reflection delay is not known a-priori \cite{journals/tit/TelatarT00}. Raghavan \textit{et al} found similar results for capacity scaling with bandwidth under sparse multipath channels \cite{Raghavan2007}. And the revision of the critical bandwidth results for mmWave by Ferrante \textit{et al} \cite{7896562} simplified the model to a multiplicative combination of rich Ricean fading, Bernouilli blockage and oxygen absorption.

In addition, the 5G \cite{3GPPNRoverall16} and WiFi6 \cite{khorov2019tutorial,8319416} standards employ OFDMA waveforms that treat each ``frame'' as an independent channel block realization with separate pilot estimation. Therefore, even when these standards operate on top of an underlying physical propagation that does not respond to the i.i.d. Rayleigh block fading model, the \textit{channel model assumption} that is implicit to the waveform design and engineering of many practical devices is indeed consistent with our model. In summary, while readers ought to be cautious about the channel model, our result insights are reasonably applicable to current wireless systems, and give analysts an important starting step toward more general results with advanced channel models in future work.

\section{OED Capacity Upper Bound}
\label{sec:eub}

Here we present an upper bound of the first term of \eqref{eq:Mfunctions}, $C_E(B,N,L)$, defined in \eqref{eq:twotermCUB}. This also proves the converse of Theorem \ref{th:energycap}.

Let us define the per-subchannel OED Capacity contribution function
$$f_E(\rho)\triangleq \sup_{p(a)s.t.a>0,\Ex{}{a}\leq \rho}\Ex{\h}{\Inf{a}{\Y}}$$
so that the OED Capacity satisfies
$$C_E(B,N,L)=\sup_{1\leq M\leq B}\frac{M\Delta f}{L}f_E(\frac{PL}{M}).$$

We can use $\x^H=\sqrt{a}\uu^H$ as defined in Lemma \ref{lem:marzetta} to define the auxiliary random variable
\begin{equation}
\label{eq:mkEqChan}
\vv=\sqrt{a}\h+\z_{a},
\end{equation}
where $\z_a=\Z\uu$ is i.i.d. AWGN of dimensions $N\times 1$. Since $\Y=\vv\uu^H+\Z(\I-\uu\uu^H)$, we verify that $p(\Y|\vv,a)=p(\Y|\vv)$ and therefore $a\to \vv \to\Y$ is a Markov chain satisfying 
$$f_E(\rho)\leq \sup_{p(a)s.t.a>0,\Ex{}{a}\leq \rho}\Inf{a}{\vv}\triangleq g_E(\rho).$$

Finally, we have that \eqref{eq:mkEqChan} is a SIMO Rayleigh fading channel with block length of $L=1$ and with power constraint $\Ex{}{a}\leq \frac{PL}{M}$, so we can apply \cite[Lemma 3]{8826443} to $g_E(\rho)$
\begin{lemma}
\label{lem:ubenergy}
(\cite[Lemma 3]{8826443}) For $\rho=\frac{PL}{M}=\Theta(N^{\frac{1}{2}+\alpha})$ with any $\alpha>0$, the upper bound 
$g_E(\rho)\leq \Theta(\frac{1}{N^{2\alpha}})$
is satisfied. 
\end{lemma}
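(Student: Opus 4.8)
The plan is to reduce the vector relation \eqref{eq:mkEqChan} to a scalar energy-detection channel and then to control the mutual information of that scalar channel. First I would observe that, conditioned on $a$, the auxiliary output is circularly symmetric Gaussian, $\vv\mid a\sim\mathcal{CN}(\mathbf{0},(1+a)\I)$, so the only feature of the output law that depends on the input is the common per-antenna variance $1+a$. Hence the energy $u\triangleq\|\vv\|^2$ is a sufficient statistic for $a$: its normalization $u/(1+a)$ equals one half of a $\chi^2(2N)$ variable and the direction $\vv/\|\vv\|$ is independent of $a$, so $u\mid a\sim(1+a)\,\mathrm{Gamma}(N,1)$. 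By the data-processing equality for sufficient statistics this gives $\Inf{a}{\vv}=\Inf{a}{u}$, turning the $N$-dimensional estimation of $a$ into a scalar problem whose effective noise is the fluctuation of a chi-squared variable with $2N$ degrees of freedom.

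Next I would bound $\Inf{a}{u}=\dEnt{u}-\Ex{a}{\dCEnt{u}{a}}$. The conditional term is available in closed form as the Gamma differential entropy, $\dCEnt{u}{a}=N+\ln(1+a)+\ln\Gamma(N)+(1-N)\psi(N)$, whose only input dependence is the benign $\ln(1+a)$. The crucial point is to bound $\dEnt{u}$ \emph{tightly}: because $u$ is a sum of $N$ i.i.d. exponential contributions it concentrates around its conditional mean $N(1+a)$ with relative fluctuations of order $1/\sqrt{N}$, so the naive maximum-entropy (exponential) bound based only on $\Ex{}{u}$ is far too loose. Writing $u=N(1+a)(1+\xi)$ with $\Ex{}{\xi}=0$ and $\mathrm{Var}(\xi)=1/N$ makes the multiplicative noise additive in the logarithmic variable, reducing the task to an additive channel $\ln u=\ln(1+a)+\ln(1+\xi)$ whose noise variance is $\Theta(1/N)$. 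Comparing the spread that the admissible inputs can induce in $\ln(1+a)$ against this $\Theta(1/N)$ noise is what produces the scaling: when the per-subchannel budget is large, $\rho=\Theta(N^{1/2+\alpha})$, the average-power constraint $\Ex{}{a}\le\rho$ confines the useful part of the input so tightly relative to the chi-squared dispersion that at most $\Theta(N/\rho^2)=\Theta(N^{-2\alpha})$ nats survive.

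The step I expect to be the main obstacle is exactly this last comparison under the \emph{average}-power constraint, because it must rule out peaky energy distributions: a vanishingly probable but very large value of $a$ is trivially distinguishable from $a=0$, and one must show that spreading probability over many such levels, subject to $\Ex{}{a}\le\rho$, cannot raise the mutual information above the stated order. Establishing the precise polynomial rate $N^{-2\alpha}$, rather than a coarser logarithmic bound, requires the careful optimization over $p(a)$ carried out for the single-use ($L=1$) SIMO Rayleigh energy channel in \cite[Lemma 3]{8826443}, which applies verbatim here since \eqref{eq:mkEqChan} is exactly that channel with power constraint $\Ex{}{a}\le PL/M$. I would therefore complete the argument by invoking that result, after which $f_E(\rho)\le g_E(\rho)\le\Theta(N^{-2\alpha})$ follows from the Markov chain $a\to\vv\to\Y$ established just above the statement.
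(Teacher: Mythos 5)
Your proposal is correct and ultimately takes the same route as the paper: the paper's entire proof is that the auxiliary channel \eqref{eq:mkEqChan} is precisely the $L=1$ SIMO Rayleigh energy-detection channel of \cite[Lemma 3]{8826443}, so that lemma applies verbatim with power constraint $\rho=\frac{PL}{M}$, which is exactly how you close your argument. Your additional scaffolding (the sufficient-statistic reduction to $u=\|\vv\|^2\sim(1+a)\,\mathrm{Gamma}(N,1)$ and the $\Theta(1/N)$ log-domain noise heuristic) is a sound sketch of the cited lemma's internals but is not needed once the citation is invoked.
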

\begin{proof}
 Same as \cite{8826443} with power constraint $\rho=\frac{PL}{M}$.
\end{proof}

As a result, when $B<\Theta(N^{\frac{1}{2}+\tau})$, $f_E(\frac{P}{M})$ is upper bounded by the coherent capacity of the channel \eqref{eq:mkEqChan}, and the optimal number of active subchannels is $M=B$. When $B\geq\Theta(N^{\frac{1}{2}+\tau})$, $g_E(\frac{P}{M})$ is upper bounded by Lemma \ref{lem:ubenergy} and the transmitter must not overspread the transmitted power in more than $M=\Theta(\sqrt{N}L)$ subchannels. Thus in general the optimal $M$ is $\Theta(N^{\min(\epsilon,\frac{1}{2}+\tau)})$ and, substituting into $\frac{M}{L}g_E(\frac{PL}{M})$, we get \eqref{eq:CEscaling}, completing the proof of Theorem \ref{th:energycap}.

%

\section{Shape Encoding Capacity Upper Bound}
\label{sec:sub}

Here we present an upper bound of the second term in \eqref{eq:Mfunctions}, $C_S(B,N,L)$, defined in \eqref{eq:twotermCUB}. We recall the definition of the conditional mutual information
$$\Ex{a}{\CInf{\uu}{\Y}{a}}=\Ex{a}{\Ex{\Y,\uu}{\log \frac{p(\Y|\uu,a)}{p(\Y|a)}}}.$$
Since we cannot compute this information exactly, we will proceed by first upper bounding the information function defined as
$\CInf{\uu}{\Y}{a}\leq \Phi(a)$
for given values of $a$. And finally we will maximize $\Ex{a}{\Phi(a)}$ to compute an upper bound of the scaling of $C_S(B,N,L)$.

Thanks to the fact that $\uu$ is unitary and $\uu,a$ are known, we compute the distribution $p(\Y|\uu,a)$ explicitly in Appendix \ref{app:fyu}, resulting in
$$p(\Y|\uu,a)=(1+ a)^{-N}(2\pi)^{-NL}\textnormal{exp}\left[-(\|\Y\|^2-\frac{a\|\Y\uu\|^2}{1+a})\right]$$

For the denominator, we settle for simply computing a function that lower bounds $p(\Y|a)$ as a function of $a$ (Appendix \ref{app:fy}). This function is \textbf{not} a p.d.f., and is expressed as
$$p(\Y|a)\geq(1+ a)^{-N}(2\pi)^{-NL}\textnormal{exp}\left[-(\|\Y\|^2-\frac{a}{1+a}\frac{\|\Y\|^2}{L})\right]$$

Combining the above, we have the following upper bound for the mutual information
\begin{equation}
\begin{split}
\CInf{\uu}{\Y}{a}
&\leq \frac{a}{1+a}\left(\Ex{\Y,\uu}{\|\Y\uu\|^2}-\frac{\Ex{\Y}{\|\Y\|^2}}{L}\right)\\
\end{split}
\end{equation}

To continue we need to calculate the average energies $\Ex{}{\|\Y\uu\|^2}$ and $\Ex{}{\|\Y\|^2}$. Noting that $\Y\uu=\sqrt{a}\h\uu^H\uu+\Z\uu=\sqrt{a}\h+\z'$,  where $\z'\sim\mathcal{CN}(0,\I_N)$, we get that $\Ex{}{\|\Y\uu\|^2}=N(1+a)$. With similar reasoning we can deduce that $\Ex{}{\|\Y\|^2}=N(L+a)$ from the fact that $\|\h\uu^H\|^2=\|\h\|^2$ and that $\Ex{}{\|\Z\|^2}=L\Ex{}{\|\z'\|^2}$. Thus we arrive at the following upper bound to the mutual information

\begin{equation}
 \label{eq:ub1}
 \CInf{\uu}{\Y}{a}\leq\Phi_1(a)\triangleq 
 \frac{a^2}{1+a}N\left(1- \frac{1}{L}\right).
\end{equation}

Since $\Phi_1(a)$ is convex, unless $a$ is subject to a peak constraint, introducing the upper bound $\Phi_1(a)$ into the average and maximizing with regard to $p(a)$ obtains an upper bound to the Shape Encoding Capacity that is too loose. In particular
$$\sup_{p(a)}\Ex{a}{\CInf{\uu}{\Y}{a}}<\sup_{p(a)}\Ex{a}{\Phi_1(a)}=\rho N (1- \frac{1}{L}).$$
where the supremum is achieved by the peaky distribution 
$$p(a)=\lim_{\delta\to0}\begin{cases}\rho/\delta&\textnormal{w.p. }\delta\\0&\textnormal{otherwise}\end{cases}.$$

In other words, while $\Phi_1(a)$ can upper bound $\CInf{\uu}{\Y}{a}$ for small values of $a$, we cannot use this upper bound to study the optimal distribution of $p(a)$ when there are no peak constraints. Therefore, as a correction factor to $\Phi_1(a)$, let us use the capacity of a AWGN channel of power $a$ as a second upper bound to the mutual information
\begin{equation}
 \label{eq:ub2}
 \begin{split}
 \CInf{\uu}{\Y}{a}&\stackrel{a}{\leq}\sup_{p(\x):\Ex{}{\|\x\|^2}\leq a}\Inf{\x}{\Y}\\
    &\stackrel{b}{\leq}L\log(1+aN)\\
    &\stackrel{c}{\triangleq} \Phi_2(a),
    \end{split}
\end{equation}
where $a)$ stems from the fact that $\uu$ is an IDUV, so get that $\Ex{}{\|a\uu\|^2}=a$, and therefore the inequality with the supremum holds. Next, $b)$ upper bounds the non-coherent mutual information by the coherent capacity with power constraint $a$. And $c)$ simply writes this second upper bound as a function depending on $a$.

Finally, we can properly conduct an input distribution optimization with regard to the  ``useful'' upper bound defined as the minimum of $\Phi_1(a)$ and $\Phi_2(a)$
\begin{equation}
 \label{eq:ub3}
 \begin{split}
 \CInf{\uu}{\Y}{a}&\leq \Phi(a)\\
 &
 \triangleq 
 \min\left[\Phi_1(a),\Phi_2(a)\right]\\
 &
 = \min\left[\frac{a^2}{1+a}N\left(1-\frac{1}{L}\right),L \log(1+aN)\right]
 \end{split}
\end{equation}
Leading to
$$C_S(B,N,L)\leq\sup_{M\in\{1\dots B\}}\frac{M\Delta f}{L}\sup_{p(a)}\Ex{a}{\Phi(a)}.$$

The shape of $\Phi(a)$ is very particular. Let us define the point where the two sides of the minimum are equal as $a_o$ satisfying
\begin{equation}
\label{eq:defa0}
 \frac{a_o^2}{1+a_o}N\left(1-\frac{1}{L}\right)=L \log(1+a_oN).
\end{equation}
For $a<a_o$, $\Phi(a)$ is convex, and for $a>a_o$, $\Phi(a)$ is concave. Therefore, we can solve the optimization separately in the concave and convex regions. To this end, we define the auxiliary variable $\Upsilon=\int_{0}^{a_o}p(a)da$, and the auxiliary distributions $q(a)=\frac{p(a|a\leq a_o)}{\Upsilon}$, and $g(a)=\frac{p(a|a> a_o)}{1-\Upsilon}$. These auxiliary variables enable the following separation of the original problem:
\begin{equation}\begin{split}
\sup_{p(a)}\Ex{a}{\Phi(a)} = \sup_{\Upsilon}\Big(&\Upsilon \sup_{q(a)}\Ex{a\leq a_o}{\Phi(a)}+\\
                                                      &+(1-\Upsilon) \sup_{g(a)}\Ex{a> a_o}{\Phi(a)}\Big).
\end{split}\end{equation}
Here, we separated the original problem in three subproblems, where one is purely convex and another is purely concave. Therefore we can consider Jensen's inequality to find the optimum distributions $q(a)$ and $g(a)$, and then complete the final result by finding the optimum value of $\Upsilon$ and constructing the final density function as $p^*(a)=\Upsilon^* q^*(a)+(1-\Upsilon^*)g^*(a)$. This leads to the following results:
\begin{enumerate}
 \item If $a_o\geq\rho$, then $\Upsilon^*=1$ and  
 \begin{equation}
  \label{eq:suppacase1}p^*(a)=q^*(a)=\begin{cases}
          a_o& \textnormal{w.p.} \frac{\rho}{a_o}\\
          0&\textnormal{otherwise}
         \end{cases}
  \end{equation}
   producing
 \begin{equation}
  \label{eq:supIacase1}
   \sup_{p(a)}\Ex{a}{\Phi(a)}=\rho \frac{a_o}{1+a_o} N(1-\frac{1}{L})
 \end{equation}
 \item If $a_o<\rho$, then $\Upsilon^*=0$, and 
 \begin{equation}
  \label{eq:suppacase2}
    p^*(a)=g^*(a)=\delta(a-\rho),
 \end{equation}
 producing
 \begin{equation}
  \label{eq:supIacase2}
  \sup_{p(a)}\Ex{a}{\Phi(a)}=L\log(1+\rho N)
 \end{equation}
\end{enumerate}

Since solving \eqref{eq:defa0} can be difficult, to complete our analysis we obtain a simple scaling characterization of $a_0$.

\begin{lemma}
 If $\tau\leq1$ the solution to \eqref{eq:defa0} scales as
 $$a_o=\Theta(\sqrt{\frac{L}{N}})$$
 And if $\tau>1$ the solution to \eqref{eq:defa0} scales as
 $$a_o=\Theta(\frac{L}{N})$$
\end{lemma}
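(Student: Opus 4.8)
The plan is to read \eqref{eq:defa0} as a balance between a polynomially growing left-hand side and an essentially logarithmic right-hand side, and to extract the scaling of $a_o$ at the level of polynomial exponents. First I would discard the factor $(1-\frac{1}{L})$, which lies in $[\frac12,1)$ for $L\geq2$ and is therefore $\Theta(1)$, so it cannot affect the exponent of $a_o$. I would then record the qualitative shape of the two sides: both $\frac{a^2}{1+a}N$ and $L\log(1+aN)$ are strictly increasing in $a$ and vanish at $a=0$; near the origin the right-hand side dominates, since its derivative at $a=0$ equals $LN$ while the left-hand side has a vanishing derivative there; and for large $a$ the left-hand side grows like $aN$ and overtakes the logarithmic right-hand side. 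This yields a positive crossing $a_o$, with uniqueness following from a monotonicity check on the ratio of the two sides (the ratio increases from $0$ to $\infty$).

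The key observation is that at the solution $a_oN$ is polynomially large in $N$, so that $\log(1+a_oN)=\Theta(\log N)=N^{o(1)}$; consequently $L\log(1+a_oN)=N^{\tau+o(1)}$, i.e. the right-hand side carries polynomial exponent exactly $\tau$, the logarithm contributing nothing to the exponent. (Making this rigorous needs a short bootstrap: first sandwich $a_o$ between two crude powers of $N$ to certify that $a_oN\to\infty$ polynomially, then feed this back into the logarithm.) With the exponent of the right-hand side pinned at $\tau$, the scaling of $a_o$ is decided by which branch of $\frac{a_o^2}{1+a_o}$ is active. I would split into two self-consistent regimes. If $a_o\to0$ then $\frac{a_o^2}{1+a_o}\sim a_o^2$ and, writing $a_o=\Theta(N^\sigma)$, the left-hand side has exponent $2\sigma+1$; balancing $2\sigma+1=\tau$ gives $\sigma=\frac{\tau-1}{2}$, i.e. $a_o=\Theta(\sqrt{L/N})$, which is consistent with $a_o\to0$ precisely when $\sigma<0$, i.e. $\tau<1$. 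If instead $a_o\to\infty$ then $\frac{a_o^2}{1+a_o}\sim a_o$, the left-hand side has exponent $\sigma+1$, and balancing $\sigma+1=\tau$ gives $\sigma=\tau-1$, i.e. $a_o=\Theta(L/N)$, consistent with $a_o\to\infty$ precisely when $\sigma>0$, i.e. $\tau>1$.

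Finally I would check the boundary $\tau=1$: both candidate exponents collapse to $\sigma=0$ and the two formulas agree, $\sqrt{L/N}=\Theta(1)=L/N$, so the two regimes meet continuously and exactly one regime is self-consistent for each $\tau$, making the exponent unambiguous. The main obstacle I anticipate is the treatment of the logarithm: a literal strict-$\Theta$ identity would carry an extra $\sqrt{\log N}$ (respectively $\log N$) factor, so the argument must be phrased in terms of the polynomial exponent $\lim \frac{\log a_o}{\log N}$, which is the operative notion of scaling used throughout the paper, so that the slowly varying $\log(1+a_oN)$ factor is absorbed into the $N^{o(1)}$ term and does not perturb $\sigma$. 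The remaining technical points, namely the bootstrap certifying $\log(1+a_oN)=\Theta(\log N)$ and the uniqueness of $a_o$, both follow from the monotonicity already recorded.
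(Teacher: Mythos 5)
Your proposal is correct and follows essentially the same route as the paper: the paper's proof likewise pins down the exponent of $a_o$ by positing $a_o=\Theta(N^\alpha)$ and showing that the two sides of \eqref{eq:defa0} can only balance at $\alpha=\frac{\tau-1}{2}$ on the quadratic branch ($\alpha\le 0$, i.e.\ $\tau\le 1$) or $\alpha=\tau-1$ on the linear branch ($\alpha>0$, i.e.\ $\tau>1$), which is exactly your self-consistent regime split phrased as a contradiction argument. If anything your write-up is slightly more careful than the paper's: your bootstrap certifying $\log(1+a_oN)=\Theta(\log N)$ and your remark that the claimed $\Theta$ holds only at the level of polynomial exponents (a genuine $\sqrt{\log N}$, resp.\ $\log N$, factor is present) address points the paper glosses over, since it carries $\Phi_2=\theta(N^\tau\log N)$ but then matches only polynomial exponents (and its displayed exponents $N^{2\alpha-1}$ and $N^{\alpha-1}$ are typos for $N^{2\alpha+1}$ and $N^{\alpha+1}$, as its own conclusions confirm).
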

\begin{proof}
 By contradiction, let us define $a_o'=\Theta(N^\alpha)$ and $\Phi_2(a_o')=\theta(N^{\tau}\log(N))$. 
 
 If $\alpha\leq0$, then $\Phi_1(a_o')=\theta(N^{2\alpha-1})$. Thus for $\alpha<\frac{\tau-1}{2}$ we have $\Phi_1(a_o')\ll\Phi_2(a_o')$ and if $\alpha>\frac{\tau-1}{2}$ then $\Phi_1(a_o')\gg\Phi_2(a_o')$. Therefore, $a_o'$ cannot be a solution to \eqref{eq:defa0} if $\alpha\neq\frac{\tau-1}{2}$, where $\alpha<0\Leftrightarrow \tau<1$. 
 
 Likewise, if $\alpha>0$, then $\Phi_1(a_o')=\theta(N^{\alpha-1})$, and $a_o'$ cannot be a solution to \eqref{eq:defa0} if $\alpha\neq \tau-1$, where $\alpha>0\Leftrightarrow \tau>1$.
\end{proof}

\begin{figure*}[!b]
\normalsize
\vspace*{4pt}
\hrulefill
\setcounter{MYtempeqncnt}{\value{equation}}
\setcounter{equation}{31}

\begin{equation}
\label{eq:fYua}
 \begin{split}
  p(\Y|\uu,a)&=|(\I_{L}+ a\uu^*\uu^T)\otimes\I_{N}|^{-NL}(2\pi)^{nL}\textnormal{exp}\left[-\vstack(\Y)^H\left((\I_{L}+ a\uu^*\uu^T)\otimes\I_{N}\right)^{-1}\vstack(\Y)\right]\\
  &\stackrel{(a)}{=}|\I_{N}|^{-N}|\I_{L}+ a\uu^*\uu^T|^{-L}(2\pi)^{-NL}\textnormal{exp}\left[-\vstack(\Y)^H\left(\left(\I_{L}+ a\uu^*\uu^T\right)^{-1}\otimes\left(\I_{N}\right)^{-1}\right)\vstack(\Y)\right]\\
  &\stackrel{(b)}{=}|\I_{L}+ a\uu^*\uu^T|^{-L}(2\pi)^{-NL}\textnormal{exp}\left[-\vstack(\Y)^H\left(\left(\I_{L}+ a\uu^*\uu^T\right)^{-1}\otimes\I_{N}\right)\vstack(\Y)\right]\\
  &\stackrel{(c)}{=}(1+ a)^{-N}(2\pi)^{-NL}\textnormal{exp}\left[-\vstack(\Y)^H\left(\left(\I_{L}-\frac{a}{1+a\uu^T\uu^*}\uu^*\uu^T\right)\otimes\I_{N}\right)\vstack(\Y)\right]\\
  &\stackrel{(d)}{=}(1+ a)^{-N}(2\pi)^{-NL}\textnormal{exp}\left[-(\|\Y\|^2-\frac{a}{1+a}\|\Y\uu\|^2)\right]\\
 \end{split}
\end{equation}

\setcounter{equation}{\value{MYtempeqncnt}}
\end{figure*}

%
%

Equipped with the scaling of $a_o$, recalling that $\rho= \frac{PL}{M}$, and using \eqref{eq:supIacase1} and  \eqref{eq:supIacase2}, we get the following cases:
\begin{itemize}
 \item For $\tau\leq 1$, 
 \begin{itemize}
    \item if $M\leq\Theta(\sqrt{NL})$ we get $a_o\ll\rho$, which leads to $\frac{M}{L}\Ex{a}{\Phi(a)}=\frac{M}{L}\Phi_2(\rho)=M\log(1+\rho N)$.
    \item if $M\geq\Theta(\sqrt{NL})$ we get $a_o\gg\rho$, leading to $\frac{M}{L}\Ex{a}{\Phi(a)}=\frac{M}{L}\frac{\rho}{a_o}\Phi_1(\rho)=\frac{M}{L}a_o\rho N(1-\frac{1}{L})=\Theta(\sqrt{LN})$.
  \end{itemize}
  Thus, for $\epsilon<\frac{1+\tau}{2}$, the optimum number of subchannels satisfies $M=B$. And for the case $\epsilon\geq\frac{1+\tau}{2}$, all choices of $M$ satisfying $M\geq\Theta(N^{\frac{1+\tau}{2}})$ are equally optimal in a scaling sense. Although different choices of $M\geq\Theta(N^{\frac{1+\tau}{2}})$ can overspread the bandwidth, this neither increases nor decreases the rate. Increasing $M$ above $\Theta(N^{\frac{1+\tau}{2}})$ merely increases the peakyness of the optimal distribution $p^*(a)$ \eqref{eq:suppacase1} in such a way that the time-frequency spreading of power remains constant and the rate scaling is unchanged.
  \item Finally, when $\tau>1$ the bandwidth overspreading threshold condition associated with $a_o\ll\rho$ scales as $M\leq\Theta(N)$, and the rate is $\Theta(N^{\min(\epsilon,1)})$.
\end{itemize}

Putting everything together produces the final Shape Encoding Capacity scaling upper bound 
$$C_S(B,N,L) \leq \Theta(N^{\min(\epsilon,\frac{1+\tau}{2},1)}).$$

\section{Distribution of $p(\Y|\uu,a)$}
\label{app:fyu}

In this section we compute the probability density $p(\Y|\uu,a)$. The Kro\"enecker product and matrix vector stacking function satisfy the relation $\vstack(\h\uu^H)=\uu^*\otimes\h$. When $\uu$ is a known vector, $(\vstack(\h\uu^H)|\uu)$ is a $NL$-dimension Gaussian-distributed vector with a singular covariance matrix $\Sigma_{\vstack(\h\uu^H)}= a\uu^*\uu^T\otimes\I_{N}$, and its p.d.f. is undefined. Fortunately, the addition of the nose term in $(\Y|a,\uu)=a\h\uu^H+\Z$ introduces independent noise terms and, therefore, $p(\Y|\uu,a)$ can be written as the p.d.f. of the $NL$-dimension Gaussian-distributed vector $(\vstack(\Y)|a,\uu) \sim \mathcal{CN}(0,\Sigma_{\vstack(\Y)|\uu,a})$, where the covariance matrix is full rank with value $\Sigma_{\vstack(\Y)|\uu,a}=\I_{NL}+ a\uu^*\uu^T\otimes\I_{N}=(\I_{L}+ a\uu^*\uu^T)\otimes\I_{N}$. 

For convenience, we write a more compact expression of the p.d.f.
\stepcounter{equation} in \eqref{eq:fYua},
where $(a)$ comes from the distributive properties of the determinant and inverse of Kro\"enecker products, $(b)$ substitutes the determinant and inverse of the identity matrix, $(c)$ applies the Sylvester rule for the determinant, and the Sherman-Morrison formula for the matrix inverse, and $(d)$ comes from $\uu^T\uu^*=1$, $\vstack(\Y)^H\vstack(\Y)=\|\Y\|^2$, and a careful reorganization of the matrix product $\vstack(\Y)^H(\uu^*\uu^T\otimes\I_{N})\vstack(\Y)=\tr\{(\Y\uu)^H\Y\uu\}=\|\Y\uu\|^2$.

\begin{figure*}[!b]
\normalsize
\vspace*{4pt}
\hrulefill
\setcounter{MYtempeqncnt}{\value{equation}}
\setcounter{equation}{32}

\begin{equation}
\label{eq:fYa}
\begin{split}
p(\Y|a)&=\int_{|\uu|^2=1}p(\Y|\uu,a)p(\uu)d \uu\\
&=\int_{|\uu|^2=1}(1+ a)^{-N}(2\pi)^{-NL}\textnormal{exp}\left[-(|\Y|^2-\frac{a}{1+a}|\Y\uu|^2)\right]\frac{\Gamma(L)}{\pi^{L}}d \uu\\
&=(1+ a)^{-N}(2\pi)^{-NL}\textnormal{exp}\left[-|\Y|^2\right]\int_{|\uu|^2=1}\frac{\Gamma(L)}{\pi^{L}}\textnormal{exp}\left[\frac{a}{1+a}|\Y\uu|^2\right]d \uu\\
&=(1+ a)^{-N}(2\pi)^{-NL}\textnormal{exp}\left[-|\Y|^2\right]\Ex{\uu}{\textnormal{exp}\left[\frac{a}{1+a}|\Sm\uu|^2\right]}\\
 \end{split}
\end{equation}

\setcounter{equation}{\value{MYtempeqncnt}}
\end{figure*}

\section{Lower bound of $p(\Y|a)$}
\label{app:fy}

The p.d.f. of the channel output may be computed from the conditional as $p(\Y|a)=\Ex{\uu}{p(\Y|a,\uu)}$. We note that for the particular case $L=1$ we have that $\uu$ is just a random phase, producing $\|\Y\uu\|^2=\|\Y\|^2$, this would lead to $p(\Y|a)|_{L=1}=p(\Y|a,\uu)|_{L=1}$ and to $\CInf{\uu}{\Y}{a}=0$. We therefore focus on the case $L>1$, when $p(\Y|a)$ is not Gaussian distributed and we are forced to come up with a lower-bounding strategy on  $p(\Y|a)$ in order to upper bound $\CInf{\uu}{\Y}{a}$.

The p.d.f. of $\uu$ is constant in the $L$-dimensional sphere, with value $\Gamma(L)/\pi^{L}$, and thus remains unchanged when $\uu$ is transformed by multiplication with any unitary matrix $p(\V\uu)=p(\uu)$. Particularly, we use the SVD of $\Y=\U^H\Sm\V$ where $\U,\V$ are unitary, $\Sm$ is the diagonal matrix containing the singular values of $\Y$, and $\uu^H\Y^H\Y\uu=\uu^H\V^H\Sm^2\V\uu$. Therefore
we get \stepcounter{equation}\eqref{eq:fYa}, shown on the bottom of this page.

Now, for the purpose of just upper bounding the mutual information, we apply Jensen's inequality in the exponential $\Ex{\uu}{\textnormal{exp}\left[\frac{a}{1+a}\|\Sm\uu\|^2\right]}\geq\textnormal{exp}\left[\frac{a}{1+a}\Ex{\uu}{\|\Sm\uu\|^2}\right]$. With this we obtain a lower bound of the function $p(\Y|a)$ that is not necessarily a p.d.f.

\begin{equation}
\begin{split}
p(\Y|a)&\geq(1+ a)^{-N}(2\pi)^{-NL}\textnormal{exp}\left[-(\|\Y\|^2-\frac{a\Ex{\uu}{\|\Sm\uu\|^2}}{1+a})\right]\\
 \end{split}
\end{equation}

Since $\Sm$ is diagonal, $\|\Sm\uu\|^2=\sum_{\ell=1}^{L}\left|s_\ell u_{\ell}\right|^2$, producing
$$\Ex{\uu}{\|\Sm\uu\|^2}
=\sum_{\ell=1}^{L}s_\ell^2\Ex{\uu}{|u_\ell|^2}
=\frac{1}{L}\sum_{\ell=1}^{L}s_\ell^2
=\frac{\|\Y\|^2}{L}.$$

Therefore 

\begin{equation}
\begin{split}
p(\Y|a)&\geq(1+ a)^{-N}(2\pi)^{-NL}\textnormal{exp}\left[-\|\Y\|^2(1-\frac{a}{1+a}\frac{1}{L})\right]\\
 \end{split}
\end{equation}

\end{document}